\newcolumntype{+}{!{\vrule width 2pt}}
\newlength\savedwidth
\renewcommand{\@biblabel}[1]{\quad#1.}
\newcommand{\revised}{}
\newcommand{\myfigScale}[1]{\includegraphics[scale = .9]{{#1}}}
\newcommand{\myfigWidth}[1]{\includegraphics[width=.9\textwidth]{{#1}}}
\newcommand{\psm}{p_{sm}}
\newcommand{\pms}{p_{ms}}
\newcommand{\ksm}{k_{sm}}
\newcommand{\kms}{k_{ms}}
\newcommand{\lnrr}{\ln \left(\frac{R^+}{R^-}\right)}
\newcommand{\dis}{\displaystyle}
\newcommand{\pe}{\phantom{-}}
\newcommand{\Rho}{\mathrm{P}}
\newcommand{\Mmo}{\mathcal{M}}
\newcommand{\Smo}{\mathcal{S}}
\newcommand{\cO}{\mathcal{O}}
\newcommand{\peak}{\Rho}
\newcommand{\Wstd}{W_{\sigma}}
\newtheorem{thm}{Theorem}
\begin{document}
\vspace*{0.2in}

\begin{flushleft}
{\Large
\textbf\newline
{Agent-based and continuous models of hopper bands for the Australian plague locust: How resource consumption mediates pulse formation and geometry} 
}
\newline
\\

Andrew J. Bernoff\textsuperscript{1},
Michael Culshaw-Maurer\textsuperscript{2},
Rebecca A. Everett\textsuperscript{3},
Maryann E. Hohn\textsuperscript{4},
W. Christopher Strickland\textsuperscript{5},
Jasper Weinburd\textsuperscript{1*}
\\
\bigskip
\textbf{1} Department of Mathematics, 
Harvey Mudd College, Claremont, CA, USA
\\
\textbf{2} Departments of Entomology and Nematology/Evolution and Ecology, University of California Davis, Davis, CA, USA
\\
\textbf{3} Department of Mathematics and Statistics, Haverford College, Haverford, PA, USA
\\
\textbf{4} Mathematics Department, Pomona College, Claremont, CA, USA
\\
\textbf{5} Department of Mathematics and Department of Ecology \& Evolutionary Biology, University of Tennessee, Knoxville, TN, USA
\\
\bigskip

%
%
All authors contributed equally to this work.





* jweinburd@hmc.edu

\end{flushleft}

\newpage

\section*{Abstract}



Locusts are significant agricultural pests. Under favorable environmental conditions flightless juveniles may aggregate into coherent, aligned swarms referred to as hopper bands. These bands are often observed as a propagating wave having a dense front with rapidly decreasing density in the wake. A tantalizing and common observation is that these fronts slow and steepen in the presence of green vegetation. This suggests the collective motion of the band is mediated by resource consumption. Our goal is to model and quantify this effect.
We focus on the Australian plague locust, for which excellent field and experimental data is available. Exploiting the alignment of locusts in hopper bands, we concentrate solely on the density variation perpendicular to the front. We develop two models in tandem; an agent-based model that tracks the position of individuals and a partial differential equation model that describes locust density. In both these models, locust are either stationary (and feeding) or moving. Resources decrease with feeding. The rate at which locusts transition between moving and stationary (and vice versa) is enhanced (diminished) by resource abundance. This effect proves essential to the formation, shape, and speed of locust hopper bands in our models.
From the biological literature we estimate ranges for the ten input parameters of our models. Sobol sensitivity analysis yields insight into how the band's collective characteristics vary with changes in the input parameters. By examining 4.4 million parameter combinations, we identify biologically consistent parameters that reproduce field observations. 
We thus demonstrate that resource-dependent behavior can explain the density distribution observed in locust hopper bands. This work suggests that feeding behaviors should be an intrinsic part of future modeling efforts.

\section*{Author summary}
Locusts aggregate in swarms that threaten agriculture worldwide. Initially these aggregations form as aligned groups, known as hopper bands, whose individuals alternate between marching and paused (associated with feeding) states. The Australian plague locust (for which there are excellent field studies) forms wide crescent-shaped bands with a high density at the front where locusts slow in uneaten vegetation. The density of locusts rapidly decreases behind the front where the majority of food has been consumed.

Most models of collective behavior focus on social interactions as the key organizing principle.  We demonstrate that the formation of locust bands may be driven by resource consumption. Our first model treats each locust as an individual agent with probabilistic rules governing motion and feeding.  Our second model describes locust density with deterministic differential equations. We use biological observations of individual behavior and collective band shape to identify numerical values for the model parameters and conduct a sensitivity analysis of outcomes to parameter changes. Our models are capable of reproducing the characteristics observed in the field. Moreover, they provide insight into how resource availability influences collective locust behavior that may eventually aid in disrupting the formation of locust bands, mitigating agricultural losses.




\section*{Introduction\revised}
Locusts are a significant agricultural pest in parts of Africa, Asia, Central and South America, and Australia. They aggregate in large groups with as many as billions of individuals that move collectively, consuming large quantities of vegetation \cite{AriAya2015, Uva1977}. 
Collective movement occurs in both nymphal and adult stages of development and is associated with an epigenetic phase change from a solitary to a gregarious social state which is mediated by conspecific density and abiotic factors \cite{AriAya2015, GraSwoAns2009, Cla1949, SimMcCHag1999, PenSim2009}.
Flightless nymphs march along the ground in aligned groups, often through agricultural systems where they cause significant crop damage as they feed and advance \cite{Cla1949, LovRiw2005, SymCre2001}.
Some species, such as 
the brown locust \textit{Locustana pardalina}, form intertwining streams of relatively homogeneous density \cite{Uva1977, SymCre2001,AriAya2015}.
By contrast the Australian plague locusts \emph{Chortoicetes terminifera} form wide, crescent-shaped bands that contain a high density in front and a rapidly decreasing density behind \cite{Cla1949,BuhSwoCli2011, HunMccSpu2008}. 
Clark \cite{Cla1949} notes:
\begin{displayquote}
\emph{The structure of bands varies according to the type of pasture through which they are passing. In areas of low cover containing plenty of green feed, bands develop well-marked fronts in which the majority of hoppers may be concentrated. In areas lacking green feed, bands lose their dense fronts and extend to form long streams, frequently exhibiting marked differences in density throughout. }
\end{displayquote}
As bands of \emph{C. terminifera} move through a field of low pasture, they create a sharp transition from undamaged vegetation in front of the band to significant defoliation immediately behind the band, see a schematic in Fig \ref{fig:pulse-schematic} or aerial photographs \cite[Fig 2]{HunMccSpu2008}, \cite[Fig 1]{Hun2004}, \cite[Fig 9]{SilMcCAng2013}, and \cite{Spr2004}.
In natural systems, \emph{C. terminifera} tend to consume one of several species of grasses; in agricultural systems, they tend to eat primarily pasture and sometimes early stage winter cereals \cite{WalHarHan1994}. 


\begin{figure}[ht]
    \centering
    \myfigWidth{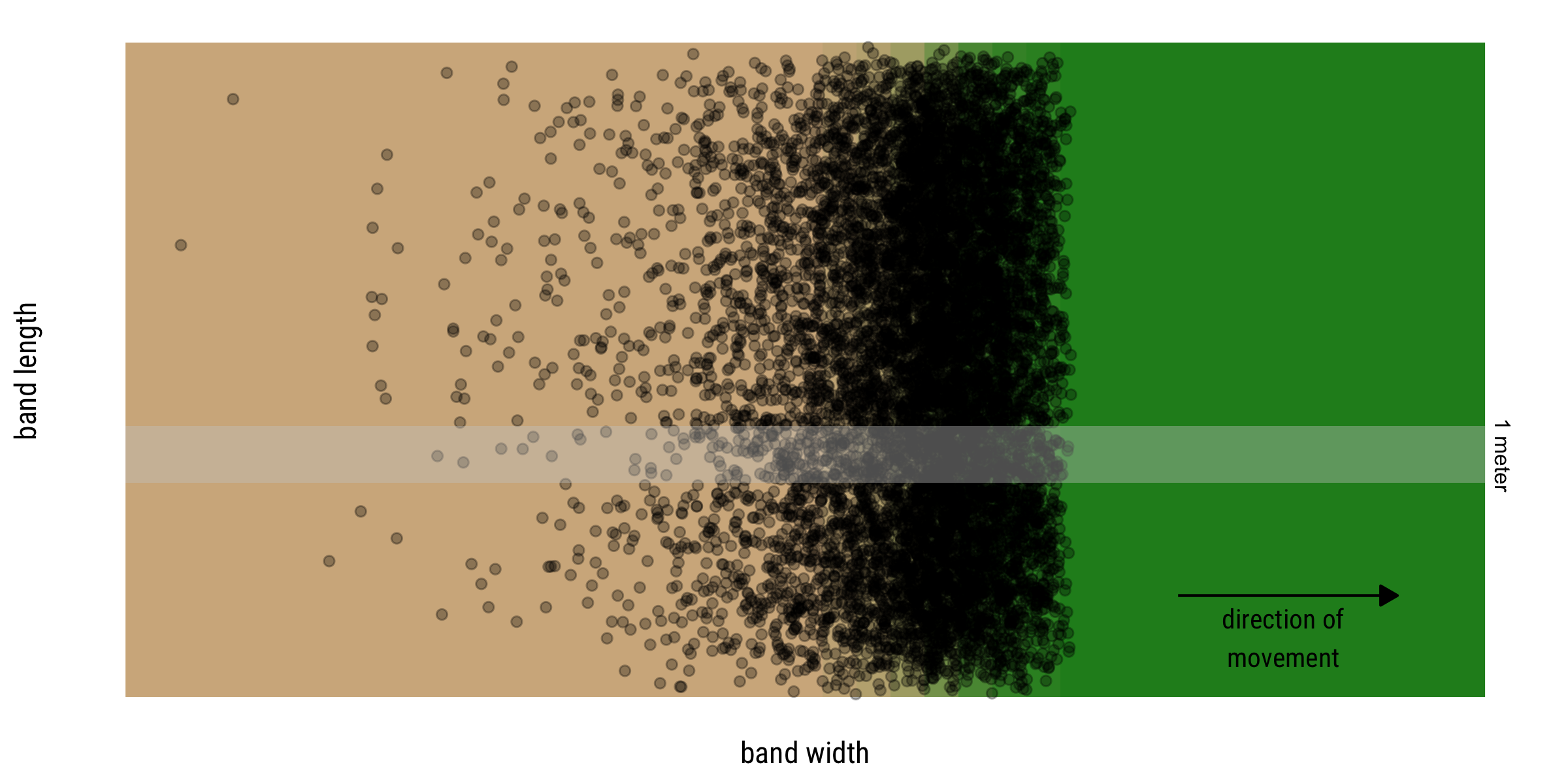}
    \caption{Schematic of a traveling pulse of locusts. The Australian plague locust forms broad hopper bands that propagate through vegetation in the direction perpendicular to the aggregate structure \cite{Cla1949,HunMccSpu2008,BuhSwoCli2011}. The cross-sectional density profile is a \emph{traveling pulse}, with a steep leading edge (right) and shallower decay behind (left) that is roughly exponentially decreasing in density \cite{BuhSwoCli2011}. Aerial photographs \cite[Fig 2]{HunMccSpu2008} show a notable contrast between the verdant green of the unperturbed crops in front of the band and the lifeless brown in the pulse's wake. The one meter wide strip above represents the dimensions we use to model locust movement in a single dimension, as described below.}
    \label{fig:pulse-schematic}
\end{figure}

The Australian plague locust \emph{C. terminifera} is the most common locust species on the Australian continent. For ease, we henceforth refer to \emph{C. terminifera} simply as ``locust". 
Outbreaks of locust nymphs emerge as the result of a pattern of rainfall, vegetation growth, and drought \cite{CliSanRea2006, Hun2004} which promotes breeding, hatching, crowding, and gregarization \cite{GraSwoAns2009}.
Gregarious nymphs form \emph{hopper bands} of aligned individuals, which march distances from tens to hundreds of meters in a single day \cite{Cla1949}. Locusts proceed through five nymphal stages, called \emph{instars}, with marching behavior beginning during the second instar \cite{Cla1949}. Throughout these phases of life, hoppers consume large quantities of green biomass with an individual eating one third to one half of its body weight per day \cite{WalHarHan1994}. Approximately four weeks after eggs hatch, locusts reach adulthood and are then capable of forming even more destructive and highly mobile flying swarms \cite{Wri1987}. 

This study focuses on collective marching in hopper bands, which dominates the behavior of gregarious locust nymphs in the third and fourth instars. Temperature and sunlight dictate a daily cycle of behavior with basking in the morning, roosting at midday, and active periods of collective marching and feeding for up to nine hours when temperatures are in an optimal range ($\sim 25^\circ$ C) \cite{HunMccSpu2008}. 
During these periods of collective marching, individuals crawl and hop across the ground in nearly the same direction as their neighbors, due to social interactions \cite{BuhSumCou2006, BuhSwoSim2012}.
When individuals at the front of a band encounter available food resources, they stop and feed (see \cite{Cla1949,HunMccSpu2008} for qualitative observations and \cite{DkhMaeHas2019} for quantitative experimental results).  Immediately after feeding, locusts exhibit a post-prandial quiescent period   whose duration increases with the amount consumed \cite{KenMoo1969, Moo1971,DkhMaeHas2019}. Locusts farther back in the band may continue to move forward, eventually passing those that stopped.  This creates a ``leap-frogging" type motion with a cycling of individuals in the dense front of the band.  Clark \cite{Cla1949} describes this behavior: 
\begin{displayquote}
\emph{Those hoppers behind the front were in places which had been partly or wholly eaten out, and thus lacked the same stimulus of food to stop them. As their average rate of progress was greater than that of the hoppers in the front, they tended to overtake them, becoming in turn slowed down in their progress by the presence of food.}
\end{displayquote}
Thus, individual motion during marching depends on individuals stopping to feed and consequently on local resource density. 
We hypothesize that this effect mediates the coherence and persistence of hopper bands with a dense front \cite{Cla1949, HunMccSpu2008} as well as the characteristic cross-sectional density distribution documented in \cite{BuhSwoCli2011}.

To test our hypothesis we conduct an in-depth modeling study concentrating on the interaction of pause-and-go motion with food resources. We assume that hoppers march in an aligned band through a field of finite resources, which is depleted as the locusts stop to feed. We develop a model for the probability of movement or stopping as a function of resource availability. We construct and analyze in tandem an agent-based model (ABM), which tracks individual locusts, and a partial differential equation (PDE) model, which considers mean-field densities. Both models produce traveling pulse type solutions that are consistent with the detailed field observations of Buhl et al \cite{BuhSwoCli2011}. The ABM is easily simulated, allows us to track individuals within the swarm, and captures the natural stochasticity of a biological process. In contrast, the PDE produces smooth solutions and lends itself to analysis and a detailed characterization of how observable outcomes, such as mean band speed, cross-sectional density profile, and density of resources left unconsumed in the wake, are related to the model's parameters.

Previous modeling efforts have considered both agent-based and continuous models, see \cite{AriAya2015} for an excellent overview of locust models. The majority of these have focused on social behavior -- notably alignment, attraction, and repulsion with respect to conspecifics \cite{VicCziBen1995, DkhBerHas2017, GreChaTu2003, BuhSwoSim2012, RomCouSch2009,AriOphLev2014,Bac2018}. Many of the agent-based models consider the pause-and-go behavior of locusts \cite{AriOphLev2014, BuhSwoSim2012, Bac2018}, and other insects \cite{NilPaiWar2013}. Continuous models have been used to study transitions between stationary and moving states \cite{Nav2013, EdeWatGru1998} and gregarization \cite{TopDOrEde2012}. Foraging has been modeled in an agent-based framework \cite{NonHol2007} and resource distribution effects on peak density has been posed as an energy minimization problem \cite{BerTop2016}. Other continuous models explicitly include food resources having animal movement depend on a combination of aggregation and gradient sensing (chemotaxis in many, starting with \cite{KelSeg1971}, or ``herbivory-taxis" in \cite{GueLir1989, Lew1994}, for instance). These studies find that traveling animal bands are the result of a balance between attraction to food and inter-animal dispersal, bearing some qualitative resemblance to the results presented here. However, locusts in the present model do not sense resource gradients (instead, direction is prescribed implicitly by social alignment) and the corresponding mathematical equations are distinct from the well-studied equations of chemotaxis.

We are aware of no models of locust band movement that incorporate foraging behavior or food resources. Previous studies such as \cite{DkhBerHas2017 ,Bac2018} suggest that the formation of sharp asymmetric fronts may be explained solely by social forces. By contrast, our main conclusion is that foraging and resource-mediated stationary/moving transitions produce pulse-shaped density profiles, supporting the observations of hopper bands with dense fronts and the inferences on foraging of Clark \cite{Cla1949} and Hunter et al. \cite{HunMccSpu2008}. A further strength of our model is that it quantitatively reproduces the observed density profiles of \cite{BuhSwoCli2011} from biologically realistic parameters.

In Section \ref{sec:models} we construct our two models beginning with biological and simplifying assumptions, and ending with parameter identification from empirical field data in Table \ref{table: paramRanges}. 
Section \ref{sec:results} contains our results, namely that both models produce a traveling pulse in locust density precisely when the locusts' stationary/moving transitions are dependent upon the amount of nearby resources. Evidence consists of numerical simulations for the ABM, mathematical traveling wave analysis for the PDE, and a robust sensitivity analysis of the models to changes in the input parameters.
In Section \ref{sec:discussion} we revisit our main findings and outline extensions of this work incorporating more biological complexity. 
Our appendices in the \nameref{sec:supplemental} contain mathematical analysis and proofs substantiating results for the PDE. 

\section{Models and Methods}
\label{sec:models}

\subsection{Basic Assumptions\revised}
We outline our assumptions for the modeling framework. Our models are minimal in the sense that we include only the effects necessary to investigate the main question: \textit{Can resource-dependent locust behavior drive the formation of a dense front and the propagation of hopper bands?} 

\begin{itemize}
        \item{We assume that resources (food) can only decrease, since locusts feed much more quickly than vegetation grows.
        Moreover, resources are identical so that they can be characterized by a single variable. Prior to locust arrival, we assume available resources have a spatially uniform density.
        }
        \item{We model only the part of the daily cycle dominated by collective movement. During a typical day, a hopper band has one or two periods of collective movement (marching) totaling up to nine hours. The remainder of the day is spent resting (basking and roosting) \cite{Cla1949,SymCre2001,AriAya2015,BuhSwoCli2011,HunMccSpu2008}.}
        \item{We assume hopper bands consist of flightless nymphs that are behaviorally identical in all regards. Bands often include a mix of two instars (e.g. II and III instars or III and IV instars) which behave qualitatively similarly with later instars being larger, eating more, and moving more quickly.}
	\item{We assume individuals move parallel to one another, creating a constant direction of movement for the entire band. Locusts are known to align their direction of movement with their nearest neighbors and may align with environmental cues such as wind or the location of the sun \cite{Cla1949,Uva1977,SymCre2001}.}
         \item{We model behavior in a narrow strip aligned to the direction of movement, as shown in Fig \ref{fig:pulse-schematic}. For dimensional consistency of the model, we assume the transverse width to be 1 meter.}
        \item{We assume that each individual is either stationary or moving. Further, only stationary locusts feed while moving locusts propagate forward with a constant speed that represents an average of crawling and hopping.}
        \item  We assume that locusts feed continuously when they are stationary. In fact, locusts eat a meal and then remain sedentary during a post-prandial period \cite{KenMoo1969, Moo1971, DkhMaeHas2019}. While biologically different, these processes are mathematically analogous and we believe including such a delay in the model is unlikely to significantly alter our results. 
\end{itemize}
Furthermore, we make additional assumptions on the rate of transitions between moving and stationary that are supported by empirical observation, although they combine and simplify multiple locust behaviors.
\begin{itemize}
        \item{We assume that locusts transition back and forth between stationary and moving states at a rate depending solely on the resources nearby.
        
        Notably, we have not included any explicit social interaction between locusts; interaction is mediated solely through the consumption of resources. Social interaction plays a well-document role in the aggregation, alignment, and marching of hopper bands, see \cite{AriAya2015} for instance. By modeling one spatial dimension only, we implicitly include the social tendency of locusts to align their direction of motion with neighbors as demonstrated in \cite{BuhSumCou2006}. We do not focus on social interactions simply because our primary goal is to investigate the effect of linking resource consumption with pause-and-go motion on hopper band morphologies. }
                \item{We assume the transition rate from moving to stationary is positive and increases as the resource density increases.
                
                Field observations \cite{Cla1949, HunMccSpu2008} and laboratory experiments \cite{DkhMaeHas2019} have shown that individuals stop marching to eat when they encounter resources in their path. While we assume resources have a uniform local density, the reality on the ground is that a locust is more likely to encounter an edible plant, and thus stop to feed, when the resource density is high.}
                \item{We assume the transition rate from stationary to moving is positive and decreasing with resource density. 

                This behavior is consistent with foraging theories, such as the simple mechanisms illustrated in \cite{Bel1990} where insects are likely to leave a patch of resources before the point of diminishing returns. The \emph{Marginal Value Theorem} \cite{Cha1976} quantifies this behavior: if an energy cost assigned to foraging is proportional to resource density, then when local resource densities drop below a critical level it costs less energy per unit resource to move on in search of higher density resources. 
                Additionally, there is a second, more subtle behavior behind this assumption. Locusts that become stationary are assumed to have consumed resources. After feeding, locusts exhibit a post-prandial period of inactivity which extends in proportion with the amount consumed \cite{KenMoo1969, Moo1971,DkhMaeHas2019}. Our assumption about this transition rate reflects a longer period of inactivity when resources are plentiful and larger amounts are therefore consumed.
                
                Foreshadowing our results, only one of these two transition rates must depend strictly on local resource availability for our model to produce coherent traveling pulse-type density profiles akin to observed hopper bands with dense fronts. 
                }
                
        \item{These transition processes are completely memoryless, which implies that locusts experience neither hunger nor satiation.
        
        The biological reality is that feeding behavior is complex, see \cite{SimRau2000} for a review. Locust hunger has been well documented in other species \cite{BerCha1972, DkhMaeHas2019}.
Since, in our model, locusts are traveling through a field of relatively plentiful resources we suggest that most locusts do not experience starvation (i.e.\ no sustenance for 24 hours as in some experiments). 
}

\end{itemize}

We remind the reader that our goal in this study is to demonstrate that resource-dependent behavior is sufficient for the formation and propagation of hopper bands with a coherent dense front. We acknowledge that the efficacy of this model may be improved by adding social interactions -- such as alignment, attraction, and repulsion. 
Additionally, we believe these additions, particularly that of alignment, would play a pivotal role when modeling locust behavior in two-dimensions, as in \cite{Bac2018, DkhBerHas2017}.

\subsection{General Model Formulation}
\label{sec: formulation}

Within the framework described above, we build two models: an agent-based model (ABM) which tracks individual locusts and a partial differential equation model (PDE) that determines locust density. These models share much in their basic structure. Table \ref{table: variables} compares their independent and state variables and Table \ref{table: paramRanges} lists their common model parameters. 

In the ABM, space and time lie on a discrete, evenly spaced lattice $(x_n, t_m)$ while in the PDE space and time $(x,t)$ are continuous. In both models, $S$ and $M$ denote the number or density of stationary and moving locusts respectively. For the ABM, the number of stationary (moving) locusts at $x_n$,$t_m$ is denoted $S_{n,m}$ ($M_{n,m}$). For the PDE, the analogous continuous quantities for the density of locusts are $S(x,t)$ and $M(x,t)$.

Resources edible by locusts are measured by the non-negative scalar density variable $R$; specifically, the resource density in the agent-based model is $R_{n,m}$, and the resource density in the continuous model is $R(x,t)$. 

We assume that the group rate of feeding is proportional to the product of the stationary locust density and the resource density; that is,
\begin{equation}
\text{(rate of change of resources at a given location)} = - \lambda SR\label{eq:gfeeding_rate}
\end{equation}
where $\lambda$ is a positive rate constant that describes how quickly individual locusts consume resources. This implies that an individual locust's foraging efficiency decreases as resources become scarcer at their location. This is not an explicit implementation of the Marginal Value Theorem but fits the general concept of foraging efficiency within a patch decreasing due to searching time, not satiation by the forager \cite{Cha1976}: as the resources at a location are eaten, locusts have difficulty locating the next unit to consume, reducing the overall rate of resource consumption at that location. We will refer to $\lambda$ as the \emph{foraging rate}, as it reflects both feeding and foraging efficiency.

We model the stationary-moving transitions as a Markov (memoryless) process. For the PDE model, this yields a rate at which the population of stationary locusts transitions to moving, and vice versa. This assumption ignores the transition history and hunger (as discussed above) of any individual locust, which is justifiable on the timescale of the collective motion (hours). We use exponentially saturating functions of resources as illustrated in Fig \ref{fig:switching}. The stationary to moving rate is denoted $\ksm$ while the moving to stationary rate is called $\kms$. Specifically,
\begin{align}
\label{eq: ksmkms}
k_{sm}(R)= \eta - (\eta - \alpha) e^{- \gamma R}, \qquad k_{ms}(R)= \theta -(\theta-\beta)e^{-\delta R},
\end{align}
where $\gamma,\delta>0$, $0\leq \beta\leq \theta$, and $0<\eta\leq \alpha$. The conditions on the parameters guarantee that $k_{sm}(R)$ is a decreasing function and $k_{ms}(R)$ is a increasing function of $R$. (Most of the analytical results concerning the PDE model hold for any choice of monotone switching rates - see \nameref{app: pulseExist} for details.)

\begin{figure}[ht]
    \centering
    \myfigScale{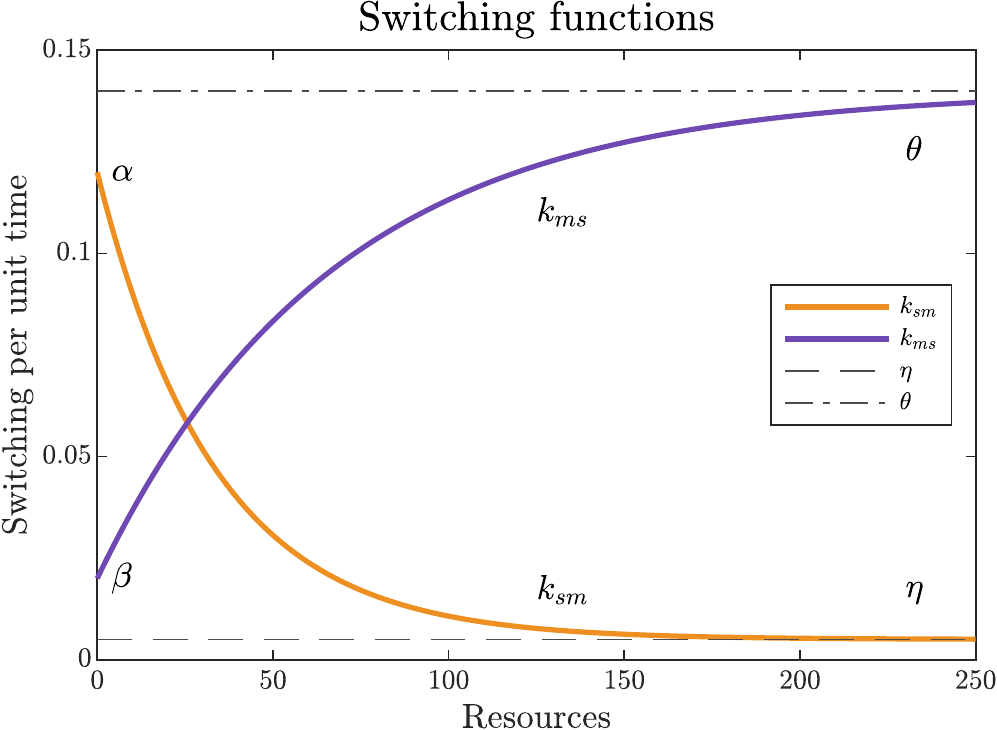}
    \caption{Transition rates for stationary to moving $k_{sm}$ (gold) and moving to stationary $k_{ms}$ (purple) with $\alpha = 0.12$, $\beta = 0.02$, $\gamma = 0.03$, $\delta = 0.015$, $\eta = 0.005$, and $\theta = 0.14$.}
    \label{fig:switching}
\end{figure}

This functional form derives from the assumption that the transition rate's sensitivity to changes in resources is proportional to the resource availability\cite{Cha1976}. Biologically, this implies that when encountering excess resources, there will be a high proportion of stationary locusts, and doubling the excess resources will do little to change the proportion of stationary locusts. Similarly, when resources are scarce, locusts are most likely to transition from stationary to moving and least likely to stop.  Mathematically, this functional form preserves the positivity of the transition rates and means that the transition rates are constant in the limit of abundant resources. 

In the PDE model, the transition rates $\ksm, \kms$ appear as coefficients in growth and decay terms in the differential equations. In the ABM we use a stochastic version of these transitions. At each time step, locusts switch from stationary to moving via a transition probability $p_{sm}$ and from moving to stationary via $p_{ms}$, both of which are functions of $R_{n,m}$. The smooth transition rates $\kms$ and $\ksm$ can be understood to be derived from these probabilities as the time step $\Delta t$ approaches zero. 
Assuming $\Delta t$ is small yields the following approximations,
\begin{equation}
p_{sm}(R_{n,m})\approx k_{sm}(R_{n,m}) \Delta t, \qquad p_{ms}(R_{n,m}) \approx k_{ms}(R_{n,m}) \Delta t.
\label{eq:SMstep}
\end{equation}
This is equivalent to assuming that each locust undergoes only a single transition in any given time step.
Biologically, these transition probabilities can be estimated from intermittent motion observed in the laboratory \cite{BazBarHal2012} or the field \cite{Buh2019}. These observations suggest that transitions occur on a timescale of a few second.
Additionally locusts also exhibit a post-prandial quiescence which may last several minutes, particularly after a large meal \cite{KenMoo1969, Moo1971,DkhMaeHas2019}.
These timescales are much shorter than the period of collective marching (hours) which justifies our original approximation that the process is Markovian (memoryless). In using ranges for $\alpha$ and $\beta$, we aim to allow for natural variation between the two species for which there is data.

\begin{table}[tp]
\begin{tabular}{| c|c|c|c|p{5.5cm} |}
\hline
Agent-Based & Units &  Continuous  & Units & Description\\
 Model&  &   Model &   &  \\
\hline
$x_n$ & $L$ & $x$ &  $L$ & position (along direction of motion)\\
$t_m$ & $T$ & $t$ & $T$ & time \\
$S_{n,m}$ & $C$& $S(x,t)$  & $P$ & number/density of stationary locusts \\
$M_{n,m}$ & $C$ & $M(x,t)$ & $P$ & number/density of moving locusts \\
$R_{n,m}$ & $Q$  & $R(x,t)$ & $Q$ & edible resource density \\
\hline 
\end{tabular}
\smallskip

 \caption{Independent and dependent variables appearing in the agent-based and partial differential equation models. Units are   $L = $ length [meters], $T = $ time [seconds], $C = $ number of locusts, $P = $ locust density [number/(meter)\textsuperscript{2}], and $Q = $ resource density [grams/(meter)\textsuperscript{2}].}
 
 \label{table: variables}

\end{table}%

\subsection{Agent-Based Model (ABM): Pause-and-Go Motion on a Space-time Grid}
We now describe the details and implementation of our agent-based model (ABM) which encodes the behavior of each individual locust. 
The temporal evolution of the ABM may be thought of as a probabilistic cellular automaton. The model is one dimensional in space, representing a 1-meter-wide cross section of the locust hopper band.

Our ABM tracks the position of each locust, their states (stationary or moving), and the spatial availability of resources (food). Locust position and the spatial distribution of resources are confined to a discrete lattice of points given by $x_n=n\Delta x$ and time $t_m = m\Delta t$, for $n,m\in\mathbb{N}$. We fix $\Delta x=v\Delta t$ so that a moving locust moves forward one step on the lattice per each time step. 

Let $X_i(t_m)$ be the position of the $i^\text{th}$ locust at time $t_m$.  Let $\sigma_i(t_m)$ be a binary state variable where $\sigma_i=1$ when the locust is moving and $0$ otherwise.  The motion of the locusts can now be expressed succinctly as
\begin{equation}
	X_i(t_{m+1})=X_i(t_m)+\sigma_i(t_m) v\Delta t = X_i(t_m)+\sigma_i(t_m)\Delta x,
    \label{eq:movestep}
\end{equation}
where we have applied the value of the state variable at $t_m$ throughout the interval of length $\Delta t$. Note this artifice ensures that the values $X_i$ remain on the lattice for all time $t_m$. 

We model transitions between stationary and moving states with a discrete-time Markov process given via the probabilities in Eq~\eqref{eq:SMstep}. Thus, at time $t_m$, each locust at position $x_n$ has a  probability $p_{sm}(R_{n,m})$ to switch from stationary $\sigma_i = 0$ to moving $\sigma_i = 1$ or a  probability $p_{ms}(R_{n,m})$ to switch from moving to stationary.

We define the histogram variables mentioned above by simply counting the number of locusts in each state at each space-time grid point:
\begin{align}
	M_{n,m} & = \sum X_n(t_m)\sigma_n(t_m) =
    \# \text{ of moving }(\sigma_i = 1) \text{ locusts at } (x_n,t_m)\\
    S_{n,m} & = \sum X_n(t_m)(1-\sigma_n(t_m)) =
	\# \text{ of stationary }(\sigma_i = 0) \text{ locusts at } (x_n,t_m).
\end{align}

We model the resources with a scalar variable $R_{n,m}$ which is defined as available food, measured in grams, at time $t_m$ in the interval of width $\Delta x$ centered at $x_n$. Following Eq~\eqref{eq:gfeeding_rate} and converting $S_{n,m}$ to a density, we have
\begin{align}\label{eq: dR}
	\frac{d R_{n,m}}{dt}= - \lambda \frac{S_{n,m}}{\Delta x} R_{n,m}.
\end{align}
Solving Eq~\eqref{eq: dR} (assuming $S_{n,m}$ is constant between $t_m$ and $t_{m+1}$) yields 
\begin{align}
\label{eq:Rstep}
	R_{n,m+1}=R_{n,m} e^{-\lambda S_{n,m}\frac{\Delta t}{\Delta x}}.
\end{align}
Biologically, this evolution implies that the resources in a patch of vegetation infested by a group of stationary, feeding locusts will decrease by approximately half in an amount of time inversely proportional to $\lambda$ times the number of locusts in the patch. That is, the half-life of resources in the patch is $\ln(2)\Delta x/(\lambda\cdot\#\text{ locusts})$.
We initialize each simulation with $R_{n,1} = R^+$, indicating an initially constant field of resources.

Together with initial conditions, Eqs~\eqref{eq:SMstep}, \eqref{eq:movestep}, and \eqref{eq:Rstep} specify the evolution completely. Our agent-based model then takes the form of three sequential, repeating steps for each locust agent:
\begin{enumerate}
	\item Update state $S$ or $M$ according to the Markov process.
    \item If in state $M$, move to the right $\Delta x$.
    \item If in state $S$, decrease resources in current location.
\end{enumerate}
Each locust performs each of these steps simultaneously with all other locusts, and resources in each location are also updated simultaneously according to Eq~\eqref{eq:Rstep}.

\subsection{PDE Model: A Conservation Law for Locusts}
We construct a continuous-time, mean-field model for the density of locusts. As outlined in Section \ref{sec: formulation}, we write a continuous function of space and time $R(x,t)$ for the density of available resources. Similarly, we write $S(x,t)$ and $M(x,t)$ for the density of stationary and moving locusts, respectively. See Table \ref{table: variables} for comparison with the variables of the agent-based model. 

These densities are governed by the partial differential equations
\begin{align}
\begin{split}
    R_t &= -\lambda SR\\
	S_t &= -\ksm S + \kms M\\
    M_t&= \pe \ksm S - \kms M - v M_x
\end{split}\qquad\qquad x\in\mathbb R,\quad t\in [0,\infty), \label{eq: PDE}
\end{align}
which describe the feeding, switching, and movement behaviors on the scale of the aggregate band.
The rate of decrease of $R$ is proportional to the density of stationary locusts and available resources as described in Section \ref{sec: formulation}. The constant of proportionality is given by the foraging rate $\lambda$. As in the ABM, locust foraging efficiency decreases as resources decrease. Note that the food $R$ is decreasing in time at each spatial point $x$.
The rate of change of $S$ is determined wholly by the switching behavior. Here, the decrease of $S$ represents the switching of locusts from stationary to moving with a rate 
dependent on $R$ through $\ksm(R)$. Similarly, $S$ increases as locusts switch from moving to stationary with rate $\kms(R)$. See Eq~\eqref{eq: ksmkms} for the functional forms of $\ksm,\kms$.
The same terms with opposite signs contribute to changes in $M$. The term $v M_x$ in the equation for $M$ represents the marching of moving locusts to the right with the individual speed $v$. This spatial derivative makes the third equation into a standard transport equation. A full list of all parameters appears in Table \ref{table: paramRanges}.

We consider initial conditions with resources that are a positive constant $R^+$ for large $x$; that is, $R(x,0)$ has $\lim_{x \to \infty} R(x,0) = R^{+}$.
We assume initial locust densities $S(x,0), M(x,0)$ are non-negative and smooth (continuous with continuous derivative). For biologically reasonable choices of such initial conditions, all solutions are guaranteed to remain non-negative, continuous, and finite by standard quasilinear hyperbolic PDE \cite{Evans2010}. 

Finally, since the switching terms are of opposite signs in the $S$ and $M$ equations, we have mathematically guaranteed a conservation law. In particular, the total number of locusts in our 1-meter cross section $N = \int^\infty_{-\infty} (S + M)\, dx$ is conserved.

\paragraph{Numerical Simulations}
For direct numerical simulations of the PDE, we use a 4th-order Runge-Kutta method for the temporal derivative with step $dt$. By choosing $dx = v\cdot dt$ we approximate the spatial derivative by a simple shift of the discretized $M$ on the spatial grid. This is equivalent to a first-order upwind scheme because $$ \tfrac{M(x_n,t_{m+1})-M(x_n,t_m)}{dt} = -v \tfrac{M(x_n,t_m) - M(x_{n-1},t_{m})}{dx} \implies M(x_n,t_{m+1}) = M(x_{n-1},t_m). $$
For additional accuracy, we implement these schemes using a split-step method, as in \cite{Ber1988} for instance. All simulations of the PDE used Matlab.

\subsection{Parameter Identification}
We identify a range of values for biological parameters from a variety of sources including research papers, Australian government guides and reports (particularly the Australian Plague Locust Commission), and agricultural organizations. 
A list of input parameters and ranges can be found in Table \ref{table: paramRanges}. A list of observable outcomes can be found in Table \ref{table: collective}. 

\subsubsection{Input Parameters}
\paragraph{Empirical Estimates}
We estimate five parameters directly from empirical observations: the total number of locusts $N$ in the cross section, the initial resource density $R^+$, the speed $v$ of an individual locust, and the two switching rates when no resources are present $\ksm(R = 0), \kms(0)$.  We provide ranges for these parameters in the first five rows of Table \ref{table: paramRanges}.

The total number of locusts $N$ in our model is the number of locusts in a 1-meter cross section as shown in Fig \ref{fig:pulse-schematic}. We rely on Buhl et al \cite{BuhSwoCli2011} to estimate $N$. In \cite[Fig 1]{BuhSwoCli2011}, the authors present three profiles of locust density computed by counting locusts in frames of video of a marching locust band taken during field experiments. The authors fit exponential curves to these data, see \cite[Fig 2]{BuhSwoCli2011}, which yield exponential rates of decay of density in time. We use these rates to estimate the area under the density profiles by integrating a corresponding exponential function. This provides three estimates for the total number of locusts who passed under the camera, which range from $9300$ to $15000$.
Rather than a precise measurement, we consider this an estimate and acknowledge that it may be improved by more direct analysis of the underlying data in \cite{BuhSwoCli2011}. We believe it does capture the correct order of magnitude and so include only a modestly larger range in our table.


Typical resource densities $R^+$ come from Meat and Livestock Australia \cite{MeaLivAus2014}. This resource indicates that pasture with vegetation between $4-10$cm high is desirable for livestock grazing. It also converts this range to a vegetation density measured in units of kilograms green Dry Matter per hectare. (Note that this measure discounts the mass of water in the vegetation, sometime up to $80\%$. While locusts typically feed on live non-dry vegetation, its water content does not provide energy or nutrients. As a result our variable $R$ reflects not the harvestable greenery but instead represents the locust-edible resources.)
We convert units and arrive at the range given in our table. 

We obtained the speed $v$ of an individual marching locust from experimental measurements in \cite{BazBuhHal2008} and unpublished field data \cite{Bac2018, Buh2019}. The experiments were conducted with the desert locust, \textit{Schistocerca gregaria}, and results in a range of $0.0339-0.0532$ m/sec. This range contains the estimate from field data for the Australian plague locust from \cite{Bac2018, Buh2019}. The latter sources also provides a second (higher) estimate that accounts for hopping, a common behavior of the Australian plague locust. Buhl's observations also show an increase in an individual's speed (averaged over crawling and hopping) with increasing temperature. Our range in Table \ref{table: paramRanges} spans all three of these estimates. Most other recorded observations of speed represent collective information -- the speed of the aggregate band -- which we discuss in Section \ref{sec: outcomes} below.

Constants $\alpha$ and $\beta$ represent the proportion of locusts that switch from stationary to moving (and vice versa) on bare ground, $R\approx 0$. One laboratory study \cite{AriOphLev2014} with \emph{S. gregaria} provides data from which we draw out a single estimate for $\beta$ as follows. The authors record the probability of these transitions in a laboratory area with no food present. They construct probability distributions (depending on time) for these transitions and fit curves to these distributions, see \cite[Fig 1]{AriOphLev2014}. They find an exponential best fit for the probability that a locust transitions from moving to stationary. The exponential rate represents a reasonable value for $\beta$, so we gather that $\beta\approx 0.368$ sec\textsuperscript{-1}. We use this estimate to set a minimum value of $0.01<\beta$ and provide an upper limit below.
The same source does not provide an estimate for $\alpha$ because the authors find that the probability distribution for stationary to moving transitions is best described by a power law.

Instead, for $\alpha$ we rely on the unpublished field data of \cite{Buh2019} for \textit{C. terminifera}. A similar procedure as above yields an estimate of $\alpha \approx 0.56$ sec\textsuperscript{-1}. We use this to set a maximal value of $1>\alpha$ and provide a lower limit below. In using ranges for $\alpha$ and $\beta$, we aim to allow for natural variation between the two species for which there is data.

\renewcommand\arraystretch{1.4}
\begin{table}[htp]
\begin{center}
\resizebox{\textwidth}{!}{
\begin{tabular}{|c|p{5cm}|c|c|c|c|c|}
\hline
& Description & Units &  Min & Max & Example & Source\\
\hline
$N$ & total number locusts in strip & $C/L$ 
& 5000 & 30000 & 7000 & \cite{BuhSwoCli2011}\\
$R^{+}$ & resource density in front of band & $Q$ 
& 120 & 250 & 200 &\cite{MeaLivAus2014}\\
$v$ & individual marching speed & $L/T$ 
& 0.003 & 0.1 & 0.04 & \cite{Bac2018, BazBuhHal2008, Buh2019}
\\
$\alpha$ & $S\to M$ transition rate for $R = 0$ & $1/T$ 
& $\eta$ 
& 1  & 0.0045& \cite{Buh2019}
\\
$\beta$ & $M\to S$ transition rate for $R = 0$ & $1/T$ 
& 0.01 & $\theta$ 
& 0.02 
& \cite{AriOphLev2014} 
\\
\hline
$\eta$ & $S\to M$ transition rate, large $R$ & $1/T$ 
& 0 
& $\alpha$ & 0.0036 &
\\
$\theta$ &  $M\to S$ transition rate, large $R$ & $1/T$ 
&  $\beta$ 
 & 12.5 & 0.14&
\\
$\gamma$ & exponent of $S\to M$ transition & $1/Q$ 
& 0.0004 & 0.08  & 0.03 & 
\\
$\delta$ & exponent of $M\to S$ transition & $1/Q$ 
& 0.0004 & 0.08  & 0.005 &
\\
$\lambda$ & individual 
foraging rate & $1/TP$
& $10^{-10}$ & $10^{-4}$ & $10^{-5}$ & \cite{BerCha1973}
\\
\hline
\end{tabular} }
 \caption{Estimates of biological parameters for both models. Parameters above the horizontal line are estimated from empirical observations, with explanations in text.
 Parameters below the horizontal are estimated from collective information and model behavior.
 Units are   $L = $ length [meters], $T = $ time [seconds], $C = $ number of locusts, $P = $ locust density [number/(meter)\textsuperscript{2}], and $Q = $ resource density [grams/(meter)\textsuperscript{2}].}
 \label{table: paramRanges}
\end{center}
\end{table}%

\paragraph{Additional Parameters}
The parameters below the horizontal line in Table \ref{table: paramRanges} do not all have readily available estimates in the literature; likely because the individual information encoded in these parameters is difficult to measure empirically amid the chaos of the swarm. We discuss each and conduct a parameter sensitivity analysis in Section \ref{sec: sensitivity}.

Constants $\eta$ and $\theta$ represent the proportion of locusts that begin/restart or stop marching in a resource-rich environment, $R\approx R^+$. To empirically measure these would require a detailed examination of locusts marching in natural plant cover. We are not aware of a situation where such a study of marching has been conducted in a setting with abundant food.

To choose a range for $\eta$ we rely on our biological assumption that a locust is more likely to begin moving when there are fewer resources nearby; that is, $\eta<\alpha$. (In our sensitivity analysis of Section \ref{sec: sensitivity}, this results in the bound $\eta/\alpha<1$.) This assumption provides a lower bound for $\alpha$ and an upper bound for $\eta$. We choose $0$ as a lower bound for $\eta,$ since it seems conceivable that a hungry locust might be satisfied to remain near food indefinitely. The converse biological assumption, that a locust is less likely to stop moving when there are fewer resources nearby, leads us to conclude that $\beta<\theta.$ (In our sensitivity analysis of, this results in the bound $1<\theta/\beta$.) This provides our upper limit for $\beta$ and our lower bound for $\theta$. We choose our upper limit for $\theta$ to be significantly larger than $\eta$, the comparable transition rate with nearby food. This encodes an assumption that the attraction of nearby food is stronger than its absence. Note that these bounds are contained in the conditions we listed after introducing $\ksm,\kms$ in Eq~\eqref{eq: ksmkms}. Namely, these choices force the transition rates to be decreasing and increasing respectively.

The parameters $\gamma$ and $\delta$ determine how sharply the transition rates $\ksm(R)$ and $\kms(R)$ depend on resources $R$. Specifically, they are the rate of exponential decrease and increase, respectively. One of our primary claims is that $\gamma$ and $\delta$ must be positive, otherwise the transition rates $\ksm$ and $\kms$ would be constant. More specifically, one may deduce that $\gamma,\delta$ should be of the same magnitude as $1/R^+$, since the functions $\ksm(R)$ and $\kms(R)$ are defined on the interval $[0,R^+]$. Using our range of $R^+$ values above, we obtain the ranges appearing in the table for $\gamma$ and $\delta.$

The individual foraging rate $\lambda$ is difficult to estimate for two reasons. First, it represents an instantaneous rate of change while most data on locust consumption is averaged over days or weeks, as in \cite{Wri1986}. We found finer measurements of feeding in \cite{BerCha1973}, where rates are averaged over ten-minute intervals. After unit conversions, we estimated a range of consumption rates on the order of $10^{-8}-10^{-6}$ grams/(locust$\cdot$sec). However, these rates are measured in a laboratory setting where locusts are provided with abundant resources to feed. This highlights a second difficulty in estimating $\lambda$; the lab data does not account for search times and so may represents a ``consumption rate" rather than a foraging rate. 
To explain, recall that our ABM places a locust at a grid point which represents a rectangle of physical space with dimensions $\Delta x \times 1\text{ m}^2$. A locust may need to move within this small rectangle to find an individual plant suitable for feeding. Since we track only the resource density in that local rectangle, this search time is simply accounted for in the foraging rate.
Other factors such as digestion times and the post-prandial rest period complicate the matter further. With such persistent uncertainty, we allow a large range for $\lambda$ and explore it thoroughly in our sensitivity analysis of Section \ref{sec: sensitivity}.

\paragraph{Example Values}
Throughout the remainder of the text we illustrate our results using the set of example parameter values appearing in the second column from the right in Table \ref{table: paramRanges}. These values produce in both models a density profile consistent with observed locust bands. We selected these values using insight gleaned from our parameter sensitivity analysis, for details see the end of Section \ref{sec: sensitivity}.

\subsubsection{Collective Observables -- Model Outcomes}
\label{sec: outcomes}
We consider five measures of collective behavior. Table \ref{table: collective} provides an empirical range for each, estimated in the following paragraphs from data in the literature. 

We approximated the collective speed $c$ of the band from observations in \cite{Buh2019, Cla1949, HunMccSpu2008}. Authors of \cite{HunMccSpu2008} observed that bands moved between $36-92$ meters per day (in ``green grass''). 
\cite[Table 4]{HunMccSpu2008} estimates the times of day during which marching was observed, with a range of 3 to 7 total hours per day. We computed averages over these time intervals and converted units to obtain a range. In Clark \cite{Cla1949}, bands of locusts were observed for periods of an hour during daily marching. Both \cite{Buh2019,Cla1949} report ranges of average band speeds overlapping with the range we computed above. Our Table \ref{table: paramRanges} shows the union of all three ranges with rounding.
%
Measuring this observable in our models is straightforward. In simulations of either the ABM or PDE we compute the mean position (or center of mass) of the locust band. Tracking the speed of the center of mass gives us the mean speed of the band. Additionally, analysis of the PDE model yields an explicit formula of for $c$ with no need for simulations, details in Section \ref{sec:PDE}.

The density of locust-edible food resources left behind by a band $R^-$ does not appear to be well studied. Wright \cite{Wri1986} makes a careful study of leftover grain fit for human consumption; however, data are reported after threshing and processing and does not describe the amount of remaining green matter edible by locusts. An alternative approach to understanding $R^-$ could be to use \cite{MeaLivAus2014} which suggests that a low range of green dry matter in pastures is $40-100$ grams per square meter. 
This low range of green dry matter inhibits vegetation regrowth, increases erosion hazards, and is insufficient for grazing livestock. We emphasize that there is no data suggesting that a marching locust band leaves a field with leftover vegetation in this range. 
In particular, this provides us with an upper range only since some of the vegetation left behind may be inedible, even for voracious locusts. Thus we arrive at a lower bound of zero for $R^-$.
%
To measure the resources left behind in our models, we take a spatial average over the part of our domain to the left of the band of locusts.

\begin{table}[htp]
\begin{center}
\resizebox{\textwidth}{!}{
\begin{tabular}{|c|p{4cm}|c|c|c|c|l|}
\hline
Symbol & Description & Units & Min & Max & Example & Citation\\
	& 			&		&	&	& Output & \\
\hline
$c$ & speed of collective band &  $L/T$ & $0.0005$ & $0.009$ & 0.0053 & \cite{Buh2019, Cla1949, HunMccSpu2008} 
\\
$R^-$ & remaining resource density & $Q$ & 0 & 100 & 0.002& \cite{MeaLivAus2014}\\
$\peak$ & maximum locust density & $P$ & $950$ & $4280$  & 1296 &\cite{BuhSwoCli2011, HunMccSpu2008, Wri1986} 
\\
$W$ & threshold width of profile &  $L$ & 30 & 500 & 18.6 &\cite{BuhSwoCli2011, HunMccSpu2008, Wri1986}\\
$\Sigma$ & skewness of locust profile & 1 & 1 & 2 & 1.78 & \cite{BuhSwoCli2011}
\\
\hline
\end{tabular} }
 \caption{Collective observables with ranges based on field research.
 Units are   $L = $ length [meters], $T = $ time [seconds], $C = $ number of locusts, $P = $ locust density [number/(meter)\textsuperscript{2}], and $Q = $ resource density [grams/(meter)\textsuperscript{2}]. Note that skewness ($\Sigma$) is nondimensional.}
 \label{table: collective}
\end{center}
\end{table}%

The maximum locust density $\peak = \max{(S+M)}$ in a band is taken from \cite[Table 1]{HunMccSpu2008}. We used the range of estimates observed for III and IV instars. This range is in line with the data of \cite{Wri1986} who estimated a maximum density of 4000 locusts per square meter. In \cite{BuhSwoCli2011}, the authors observed maximum densities ranging from $600-1200$ locusts per square meter. We expect that these densities lie in (and just out of) the lower end of our range because the studies of \cite{BuhSwoCli2011} were conducted on bare ground with no vegetation while, typically, locusts aggregate into denser bands in lush vegetation, as observed in \cite{Cla1949} for instance. 
The maximum density of a band in our models is measured simply by adding the components $S$ and $M$ and taking the maximal value.

The width $W$ of the band, measured parallel to the direction of motion, is taken primarily from Hunter et al \cite{HunMccSpu2008}. 
Hunter et al measured the widths of bands by walking from the front into the band until ``marching was no longer seen''. 
Estimates from other sources fall in line with part of the range found in Hunter.  For instance, $30-140$ meters in \cite{Wri1986} 
or $50-200$ meters in \cite{BuhSwoCli2011}. 
We attribute the large range of band widths in \cite{HunMccSpu2008} to the fact that these observations come from bands with a variety of sizes, as can also be seen by the large range for maximum densities in the same data set.

Measuring band width $W$ in our model is not entirely straightforward as we cannot simply observe where ``marching [is] no longer seen'', as in \cite{HunMccSpu2008}. Marching refers to a consistent movement of locusts with a preferred direction determined by alignment with their nearby neighbors. Since our models assume that locusts are always highly aligned, we rely on the locust density to determine where marching occurs. 
Experimental data and modeling work in \cite{BuhSumCou2006} suggest that locusts in a group with a density greater than $20$ locusts per square meter are likely to be highly aligned. We thus take $W$ to be the length of the spatial interval where our density profile measures above the threshold of $20$ locusts/m\textsuperscript{2}.

This threshold definition of width $W$ is biological and observable but it is not a good quantitative measure of the shape of a density distribution. For instance, consider a distribution with a maximum density less than the threshold density. This distribution will always measure $W = 0$ regardless of if it is very wide with a large total mass or if it is narrow with a much smaller mass. In other words, $W$ does not scale with the total number of locusts in our band. We therefore introduce a second notion of width for use in comparing the shapes of bands with different total masses. A natural choice is the standard deviation of locust positions. 
We denote our standard deviation width by $\Wstd$ and use it particularly in our sensitivity analysis of Section \ref{sec: sensitivity}. Unfortunately, there is no general correspondence between our two notions of width $W$ and $\Wstd$. Even for a fixed mass, one can construct distributions with different shapes and broad ranges of $\Wstd$ while keeping $W$ constant. For a given parameter set and varying mass we do compute $W$ and make some a posteriori comparisons below.

The skewness $\Sigma$ of a distribution is the third central moment and measures the distribution's symmetry about its mean. When $\Sigma = 0$ the distribution is symmetric while $\Sigma > 0$ suggests the distribution is leaning to the right with a longer tail on the left. (We acknowledge that this is the opposite of the standard convention.) 
Any exponential distribution $e^{-Ax}$ has skewness $\Sigma = 2$. Since \cite{BuhSwoCli2011} has demonstrated that an exponential fits well the locust density behind the peak, we consider $2$ as a physically realistic upper bound. Including the sharp increase and maximum density at the front of the band will decrease skewness suggesting that we might expect values in the range $1< \Sigma < 2$.

\paragraph{Collective Observables for Example Values} The example parameter values produce rather realistic collective outcomes; each of them is very nearly in the range obtained from the literature, see Table \ref{table: collective}.
A small exception is the threshold width $W$, which is less than twelve units outside a large range of several hundred units. Secondarily, we remind the reader of our difficulty in estimating the remaining resources. We interpret the small value $R^- = 0.002$ g/m\textsuperscript{2} to mean that in our models bands of locusts eat essentially all of the edible vegetable matter. We do not claim that they leave behind no vegetation at all.


\section{Results}
\label{sec:results}

\subsection{ABM -- Numerical Results}

Typical behavior for the agent-based model is a transient period followed by a traveling pulse shape. During the transient period, the locust histogram variables $S_{n,m}$ and $M_{n,m}$ evolve to an equilibrium profile that moves with constant speed, each with stochastic variation at each time step. The duration of the transient period and shape of the equilibrium profile vary depending on biological input parameters, while the level of stochastic noise depends primarily on the size of $\Delta t$. We explored a refinement of $\Delta t$ from 1 sec to 0.1 sec and observed similar behavior with decreasing levels of noise. In all results presented in this section we use $\Delta t = 1$ sec and our example values from Table \ref{table: paramRanges} for all biological parameters.

Fig \ref{fig:ABM_1}A shows the instantaneous speed of the mean position of all locusts over the course of 10000 sec. After an initial increase, the speed stabilizes around an average $c = 5.3\times 10^{-3}$ m/sec with a standard deviation of $0.16 \times 10^{-3}$ m/sec. Individual locusts move according to a biased random walk around the mean position, as illustrated by the paths of five sample locusts in Fig \ref{fig:ABM_1}B. Note the brief period of transients visible as arcing curves near $t = 0$, after which the distance to the mean is given by a piece-wise linear function for each locust. Intervals with positive slope $v-c$ correspond to periods where the individual was moving, while negative slope $-c$ indicates periods where the individual was stationary and the mean position marched on ahead.

\begin{figure}[ht]
\centering
\myfigWidth{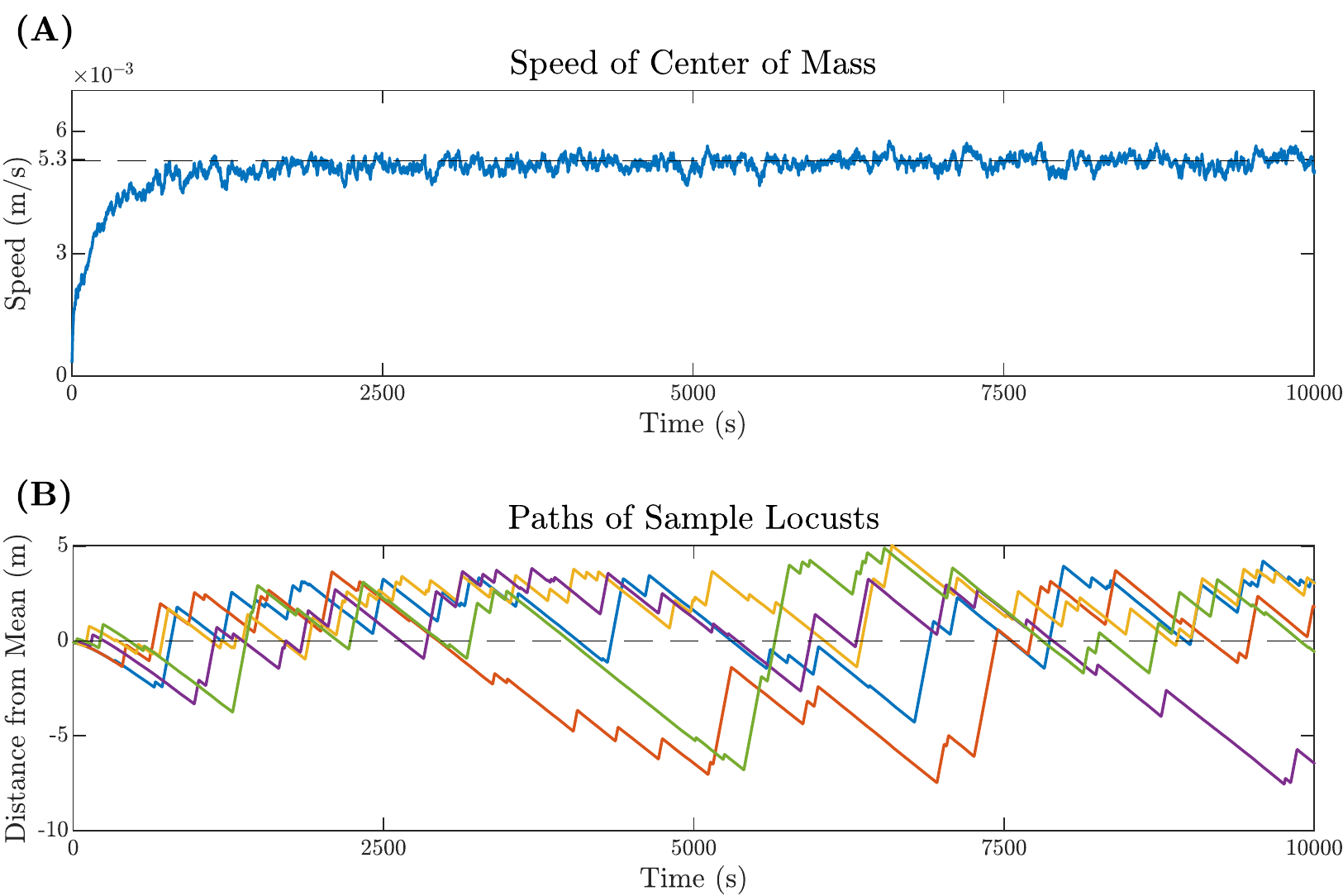}
\caption{(A) Speed of the mean position of all locusts (center of mass of the swarm). Note the initial increase followed by a sustained period of variation around the average $c = 5.3\times 10^{-3}$ m/sec. The standard deviation around $c$ is $0.16\times 10^{-3}$ m/sec after transients. (B) Paths of five sample locusts, each shown in a different color. Note the initial transients appearing as curves near $t = 0$, after which all each path appears piece-wise linear with either positive or negative slope corresponding to when the given locust was in a moving or stationary state. Each locust spends some time ahead of the mean and some time behind it, reminiscent of the ``leap-frogging'' behavior noted in \cite{Cla1949}.}
\label{fig:ABM_1}
\end{figure}

The shape of the traveling pulse may be seen in Fig \ref{fig:ABM_wave}. The final histogram of locusts per spatial grid point at time $t=15000$ appears in Fig \ref{fig:ABM_wave}A. A time-averaged pulse shape appears in Fig \ref{fig:ABM_wave}B. We construct this smooth density profile by averaging histograms for all time steps after the end of transients, in this case approximately $t = 7500$. Both plots show corresponding resource levels. The resources left behind $R^-$ after the pulse has completely passed depends primarily on the foraging rate constant $\lambda$. Shape of the traveling pulse profile also depends on $\lambda$ but also on a complex combination of parameters in the stationary-moving transition probabilities $\psm,\pms.$ For more detail on how the model depends upon parameters, see Section \ref{sec: sensitivity}.

\begin{figure}[ht]
\centering
\myfigWidth{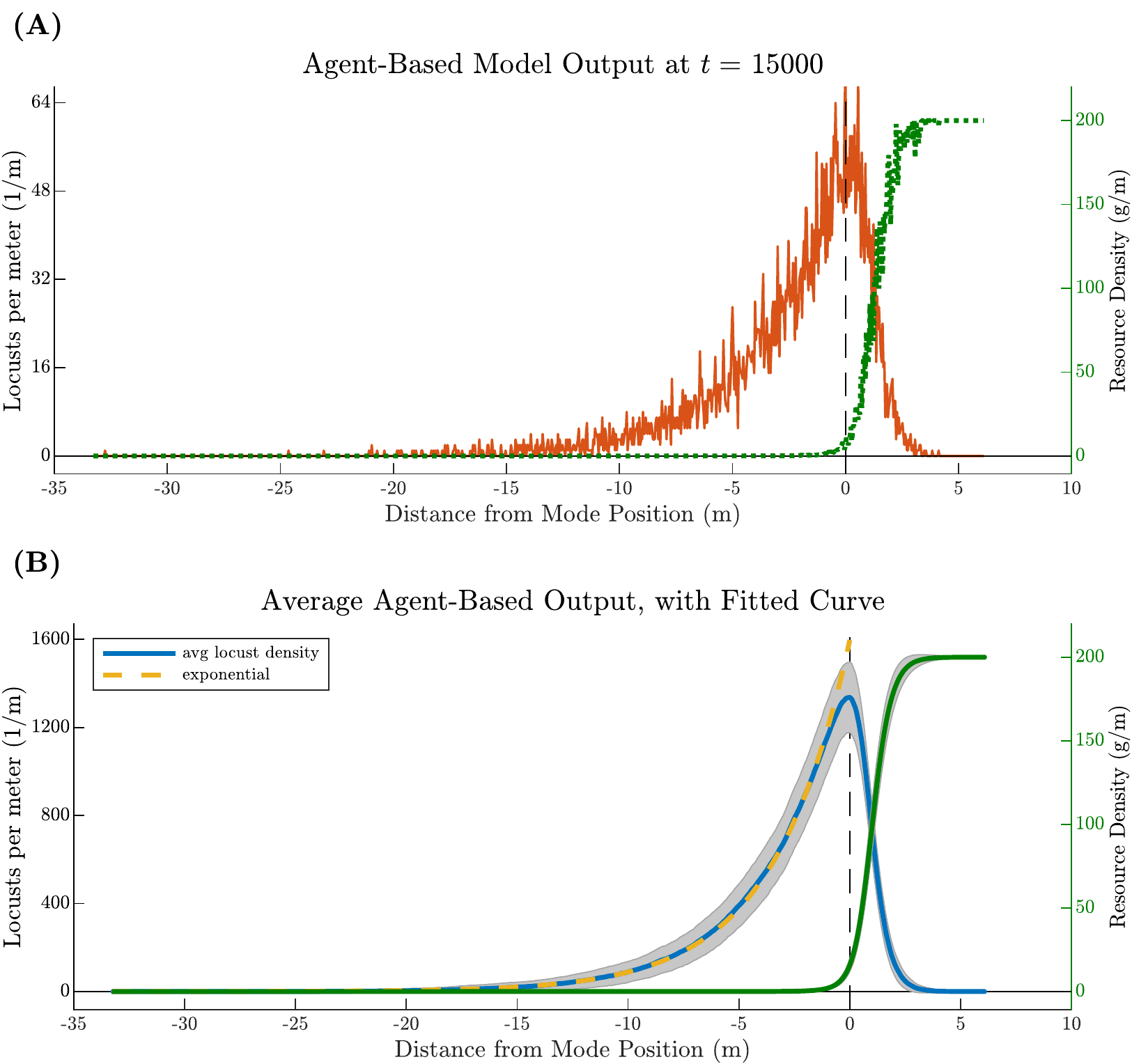}
\caption{Output of the agent-based model with $N = 7000$ locust agents at time $t = 15000$ sec, re-centered so that the mean position of all locusts occurs at zero. (A) Shows the final state of the model including the number of locusts at each spatial grid point (orange) and the remaining resource density at each spatial grid point (dotted, green). Compares well with previously published data \cite[Fig 1]{BuhSwoCli2011}. (B) Displays a time-average of model outputs taken after an amount of time to account for transients (in this case, approximately $t=7500$). Gray shading indicates $\pm$ one standard deviation from the average locust density (blue) and resources (green). The tail of the pulse agrees well with an exponential least squares fit (gold).
}
\label{fig:ABM_wave}
\end{figure}

Qualitative and quantitative observations suggest that the tail of the density distribution of a hopper band is roughly exponential in shape \cite{BuhSwoCli2011, Cla1949, HunMccSpu2008}. Results from our agent-based model agree. We fit an exponential curve $e^{a + b x}$ to the tail of our average traveling pulse and obtained $a = 4.11, b = 0.2831$ and a root-mean squared error of 15.94, see the gold curve in Fig \ref{fig:ABM_wave}B. These data are within an order of magnitude of those observed in the field from \cite[Fig 1,2]{BuhSwoCli2011}. (To make this comparison, one must convert the independent variable in the exponential from space in our numerical data to time in the empirical data. Since the pulse travels with with constant speed $c$, we have $x = ct$ and our converted exponential is $e^{a + b c t}$ with $b c = 1.50 \times 10^{-3}$, compared with exponential rates on the order of $10^{-2}$ in \cite{BuhSwoCli2011}.)

\subsection{PDE -- Hopper bands as traveling waves}
\label{sec:PDE}



\paragraph{Hopper bands require $R$-dependent switching}
To demonstrate the importance of the $R$-dependence in the switching rates $\ksm,\kms$, we first consider a simplification of our model. Suppose that these switching rates are constant ($\ksm \equiv \alpha$, $\kms \equiv \beta$). We mathematically determine the long-time behavior of solutions to this simplified problem in \nameref{app: telegraph}.  For any locust density solution $\rho = S+M$, the center of mass moves to the right with a speed that approaches $v\frac{\alpha}{\alpha+\beta}$ as $t\to \infty$. This is consistent with our search for traveling-wave solutions. However, we also find that the asymptotic standard deviation $\Wstd \sim \sqrt t$ so that solutions spread diffusively for all time. In other words, no coherent hopper bands form in the long-time limit.  Gray dashed lines in Fig \ref{fig: Rdependent}A depict this behavior, illustrating the decay of a locust density profile with resource-independent switching rates.  

\begin{figure}[ht]
\centering
\myfigWidth{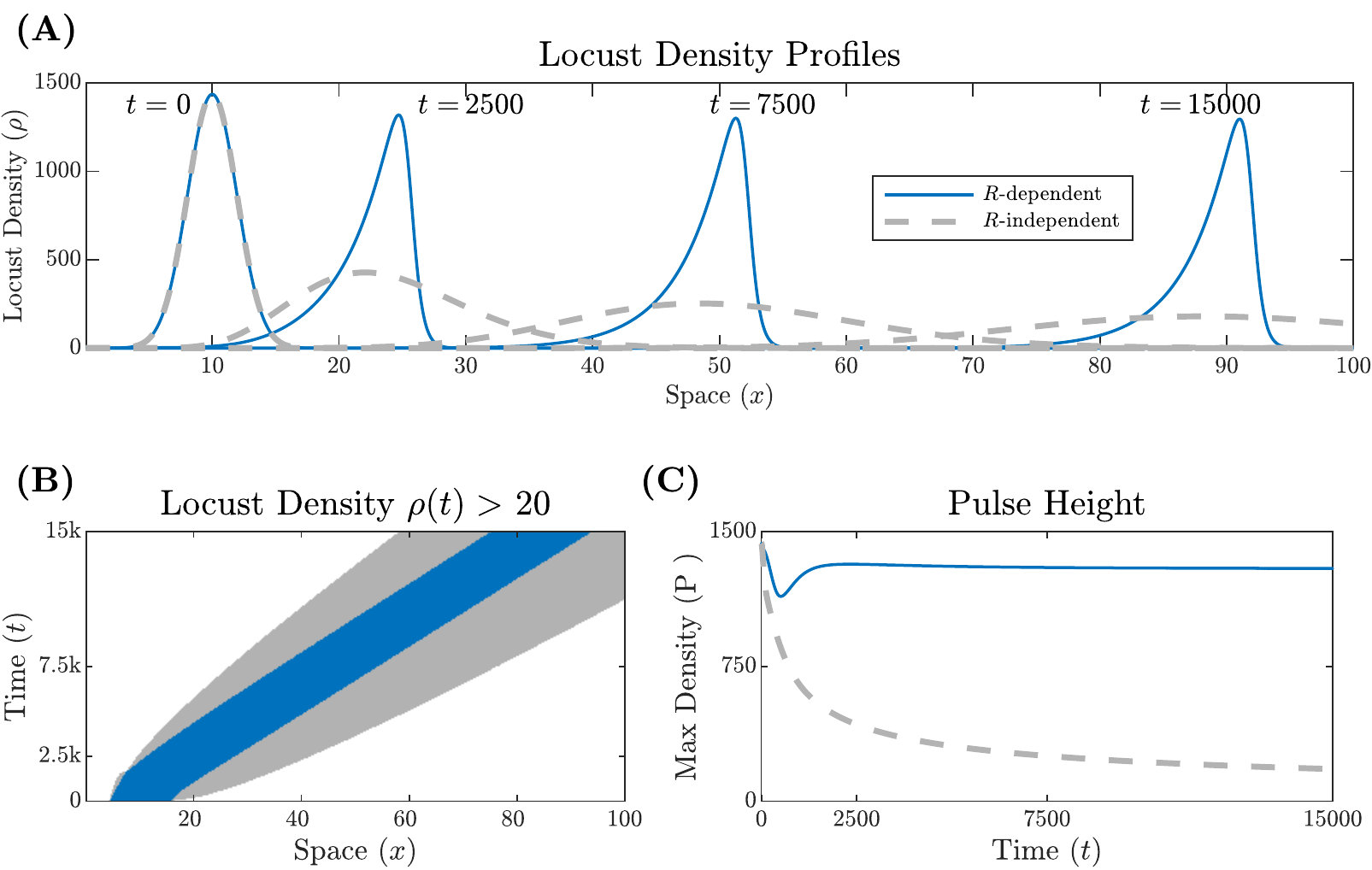}
 \caption{Locust density profiles with $R$-dependent (solid blue line) and $R$-independent (dashed gray line) switching rates. Each profile evolves from the same initial condition. (A) Shows snapshots of the density profiles over distance and time for both types of switching rates. (B) Illustrates the width of the bands where color represents a locust density greater than 20. (C) Displays the peak density of each pulse over time.}
 \label{fig: Rdependent}
\end{figure}

\paragraph{Existence of traveling wave solutions}
Returning to the main case with $R$-dependent switching, we show existence and development of hopper bands as traveling wave solutions to the PDE \eqref{eq: PDE}. A traveling wave is a solution with a fixed profile that propagates right or left with a constant speed $c$. Since locusts move only to the right in our model, we expect right-moving traveling waves and \nameref{app: pulseExist} includes a mathematical analysis of these solutions. Numerical simulations suggest that these traveling waves organize all long-time dynamics of the model. That is, all solutions with our initial condition appear to converge to a traveling wave. Biologically, we conclude that a typical initial distribution of locusts aggregates into a coherent hopper band. The solid blue curves in Fig \ref{fig: Rdependent}A show snapshots of the asymmetrical traveling wave created by $R$-dependent switching rates. 

Fig \ref{fig: Rdependent}B and \ref{fig: Rdependent}C compare the width and maximum density of the profiles for switching rates with and without resource dependence. In Fig \ref{fig: Rdependent}B, colored  regions correspond to a locust density greater than $20$ locusts/m with gray and blue corresponding to $R$-independent and $R$-dependent switching rates, respectively. As the locust band without $R$-dependent switching progresses, the width of the gray region increases in time, showing diffusive spreading. On the other hand, the width of the locust band with $R$-dependent switching (blue) remains constant over time. Additionally, the locust band with $R$-dependent switching reaches a constant height as seen in Fig \ref{fig: Rdependent}C (blue).  In contrast, the maximum locust density with $R$-independent switching rates decreases over time as locusts spread out (dashed gray).

\paragraph{Traveling waves dynamically select collective observables}
By viewing hopper bands as traveling waves, our existence proof also determines a relationship between the total number (or total mass) $N$ of locusts in our 1-meter cross section, the average band speed $c$, and the initial and remaining resources $R^+$ and $R^-$. In \nameref{app: EqnsMono} we show that these four variables must satisfy an explicit equation 
for any traveling wave. One consequence is that our model exhibits a selection mechanism whereby the average band velocity and the remaining resources are determined by the number of locusts in the band and the initial resource level.

These explicit equations are illustrated in Fig \ref{fig:cfTheory}. Each subfigure shows curves on which $R^+$ is constant (level curves). Plotting these in the $N,c$-plane (mass vs. speed), we obtain Fig \ref{fig:cfTheory}A. (Here each curve is parameterized by $R^-$.) Note that the curves appear monotone: speed $c$ increases as a function of mass $N$. Biologically, this is what one expects; a larger swarm consumes food more quickly and moves on at a faster average pace.
In Fig \ref{fig:cfTheory}B, we plot the same level curves in the $R^-,c$-plane (remaining resources vs. speed). (Now each curve is parameterized by $N$.) Again the curves are monotone but we now see that speed $c$ decreases as a function of remaining resources $R^-$. Here we also observe that the speed is much more sensitive when the remaining resources are very small. 
In \nameref{app: EqnsMono}, we use the explicit formulas 
to prove the monotonicity of speed as a function of input parameters.

\begin{figure}[ht!]
\centering
\myfigWidth{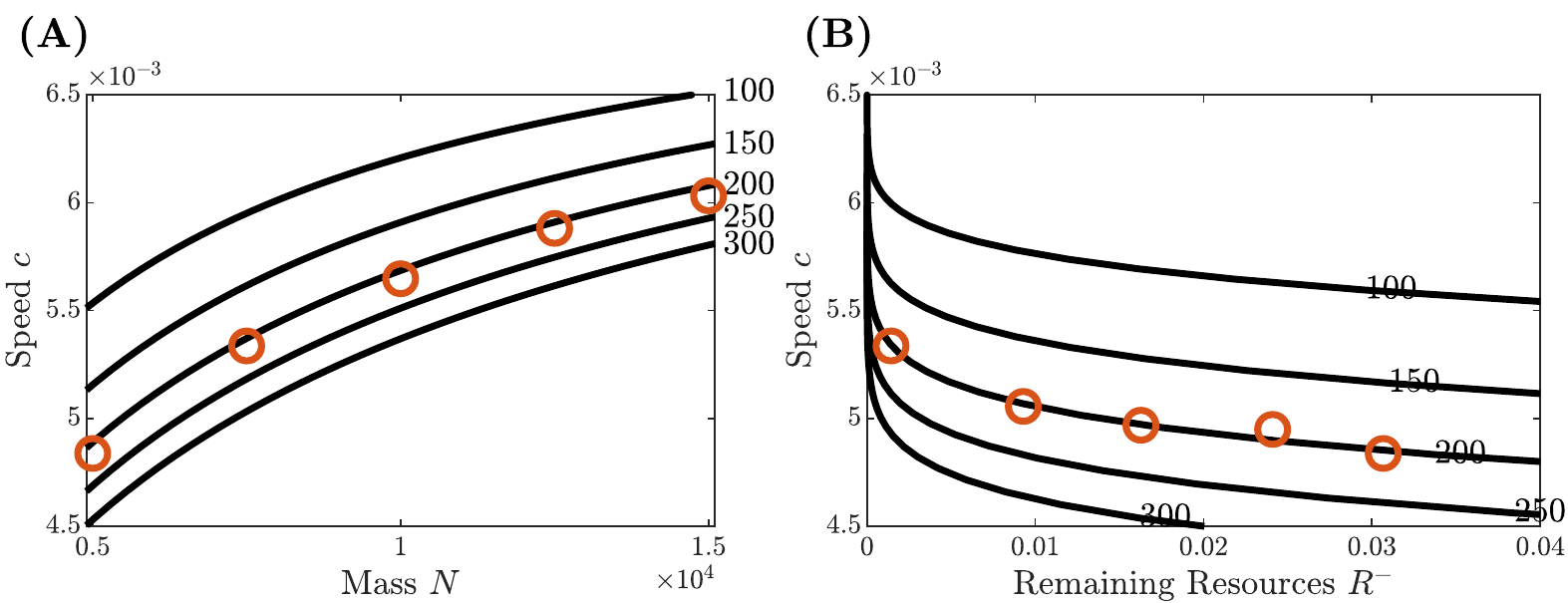}
\caption{Level curves on which initial resources $R^+$ are constant (black curves) computed explicitly from the analytic formulas of the PDE model and numerical data points (orange circle) generated by direct simulation of the ABM with $N = 5000,5250, 5500, 6000, 7500,10000, 12500,15000$.}
\label{fig:cfTheory}
\end{figure}

\subsection{Agreement between ABM and PDE}

We evaluate agreement between our two models by comparing the collective observables of Table \ref{table: collective}. We divide these into two groups: the shape of the band as characterized by maximum density $\peak$, width $W$, and skewness $\Sigma$; and the mean speed $c$ and remaining resources $R^-$, which we consider to be more agriculturally relevant.

\paragraph{ABM simulations and PDE analysis}
The quantities $c$ and $R^-$ can be determined for the PDE model via the traveling wave analysis of the last section. This analysis results in explicit formulas in \nameref{app: EqnsMono}. Substituting input parameters total mass $N$ and initial resources $R^+$, one can calculate exact results for $c$ and $R^-$. These relationships are represented by level curves in Fig \ref{fig:cfTheory}, for details see Section \ref{sec:PDE}.

We ran direct numerical simulations of the ABM for selected values of the total mass 
$N = 5000,5250, 5500, 6000, 7500,10000, 12500,15000$. In each simulation we used our example values for all other biological parameters. We ran each simulation for $2.5\times 10^4$ time steps with $\Delta t = 1$ for a final end time of $25000$ sec and confirmed that the simulation reached the end of transients. 
We measured the collective speed $c$ and remaining resources behind the band $R^-$ for each simulation. The resulting values agree with the explicit formulas to within $1\%$ and are shown in Fig \ref{fig:cfTheory} (orange circle).

\paragraph{Direct simulation of both models}
We used direct numerical simulation of both models to evaluate their agreement on the basis of the shape characteristics maximum density $\peak$, standard deviation width $\Wstd$, and skewness $\Sigma$. 

\begin{figure}[hb!]
    \centering
    \myfigWidth{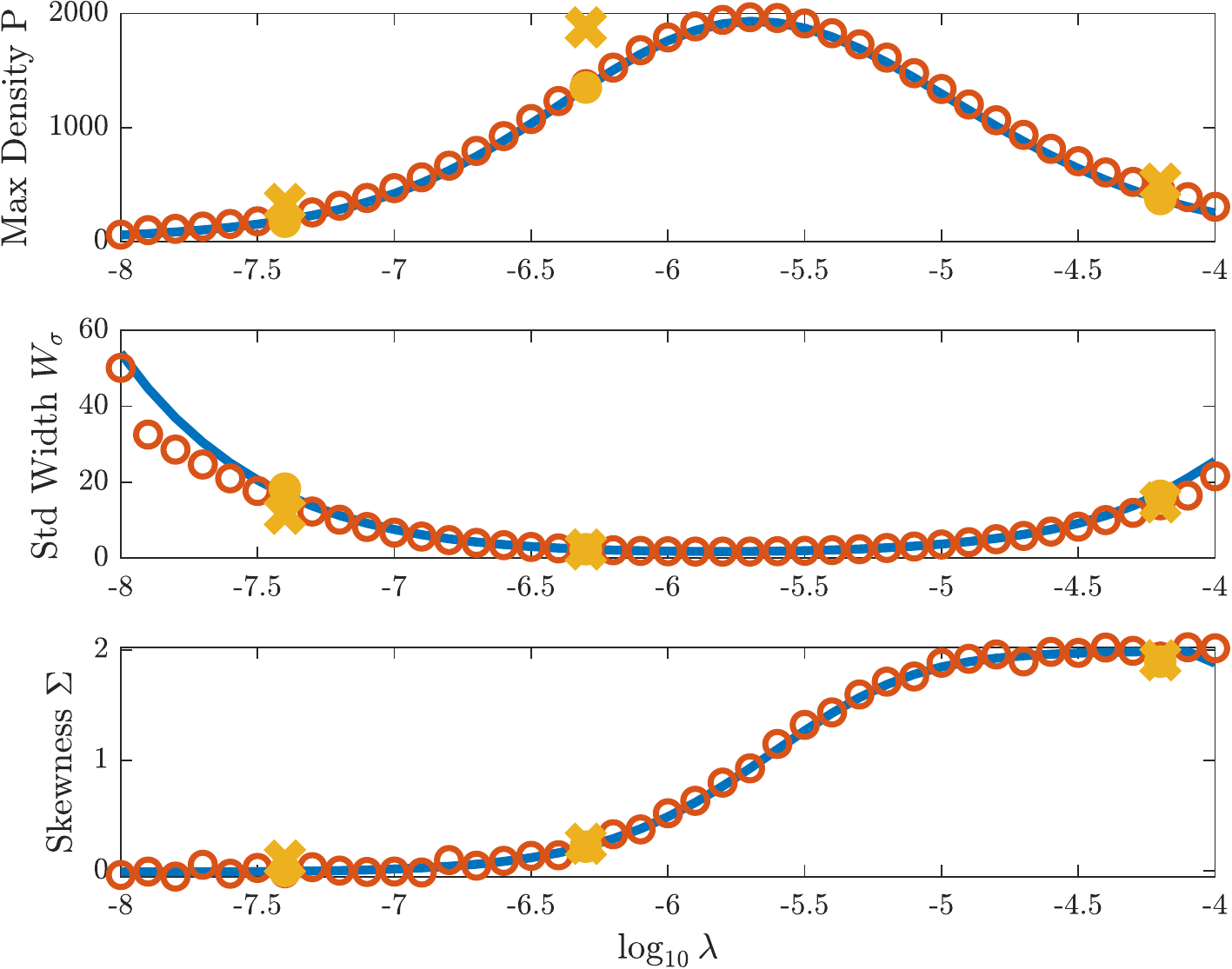}
    \caption{Comparison of the peak, width, and skewness of profiles from the PDE (blue line) and the ABM (orange circle), both obtained through direct numerical simulation for $2\times 10^5$ time steps. Each shape observable is measured from the final numerical output for the PDE and from a time-averaged output the ABM.
Longer simulations with $10^6$ time steps, for ABM (gold x) and PDE (gold dot) show little evolution in the profile for longer times. Note that the maximum density is higher for long simulations of the ABM (gold x) because these represent a single instance, rather than an average.
    }
    \label{fig:cfShape}
\end{figure}

We ran both models for $nt = 2\times 10^5$ time steps using our example parameters and a range for the foraging rate $\lambda$ so that $-8<\log(\lambda)<-4$. For each value of $\lambda$, we plot the shape characteristics in Fig \ref{fig:cfShape}. For the PDE, we measure the shape characteristics of the final output density profile. For the ABM, we measure the shape characteristics of a time-averaged density profile (as constructed in Fig \ref{fig:ABM_wave}B).
The plots in Fig \ref{fig:cfShape} are the result of continuation in the parameter $\log(\lambda)$. We begin with $\log(\lambda) = -4$ and chose initial conditions computed from independent simulations of each model. For each value of $\log(\lambda)$ the algorithm proceeds as follows: We run both models for $nt$ time steps; measure $\peak,\Wstd,$ and $\Sigma$; choose new spatial grids for each model based on the value of $\Wstd$; increase $\log(\lambda)$ by $0.1$; and use the current output as the next initial condition. Practically speaking, the interval $-8<\log(\lambda)<-4$ is in fact covered by three such continuations originating at $-4$ and $-7$. Note that our numerical scheme begins to reach its limits as $\log(\lambda)$ approaches $-8$ because there the evolution of the profile shape is so slow that it requires very long computation times to reach equilibrium. This is also why we do not cover the full range of $\log(\lambda)$ explored in the next section.

To visually compare the profiles, see Fig \ref{fig:cfShape_snap}. These six profiles are the result of running each model for $nt = 10^6$ time steps with our example parameter values and selected $\log(\lambda) = -7.4,-6.3,-4.2$. First, note the strong agreement along each row. Second, a data point (gold dot and x) from each of these $\log(\lambda)$ values is included in the plots of Fig \ref{fig:cfShape}. Since there is little difference between these data points and the rest in the figure, which are the result of only $2\times 10^5$ time steps, we can conclude that the shape characteristics have reached near-equilibrium values. The gold x at $\log(\lambda) = -6.3$ demonstrates the stochasticity of the ABM -- the maximum of a single distribution is larger than the maximum of the time-averaged profile, see Fig \ref{fig:cfShape_snap} (right, center).

Finally, these profiles also provide insight into the possible shapes of density profiles far from our example value of $\log(\lambda) = -5$. Immediately, we notice that the remaining resources $R^-$ behind the pulse decrease quickly as foraging rate $\lambda$ grows, confirming intuition. Next, the shape also varies dramatically as can be seen by noting that the horizontal axes in each row have vastly different scales. In particular, the profiles in the top row are short and wide while the middle row is narrow and tall, all having the same total number of locusts. The bottom row reveals a transition where the resources are nearly all depleted behind the pulse, leading to wide asymmetrical profile as observed in the field \cite{HunMccSpu2008}. 

We carry out a more rigorous study of how the model responds to changes in the input parameters in the next section.

\begin{figure}[hb!]
    \centering
   \myfigWidth{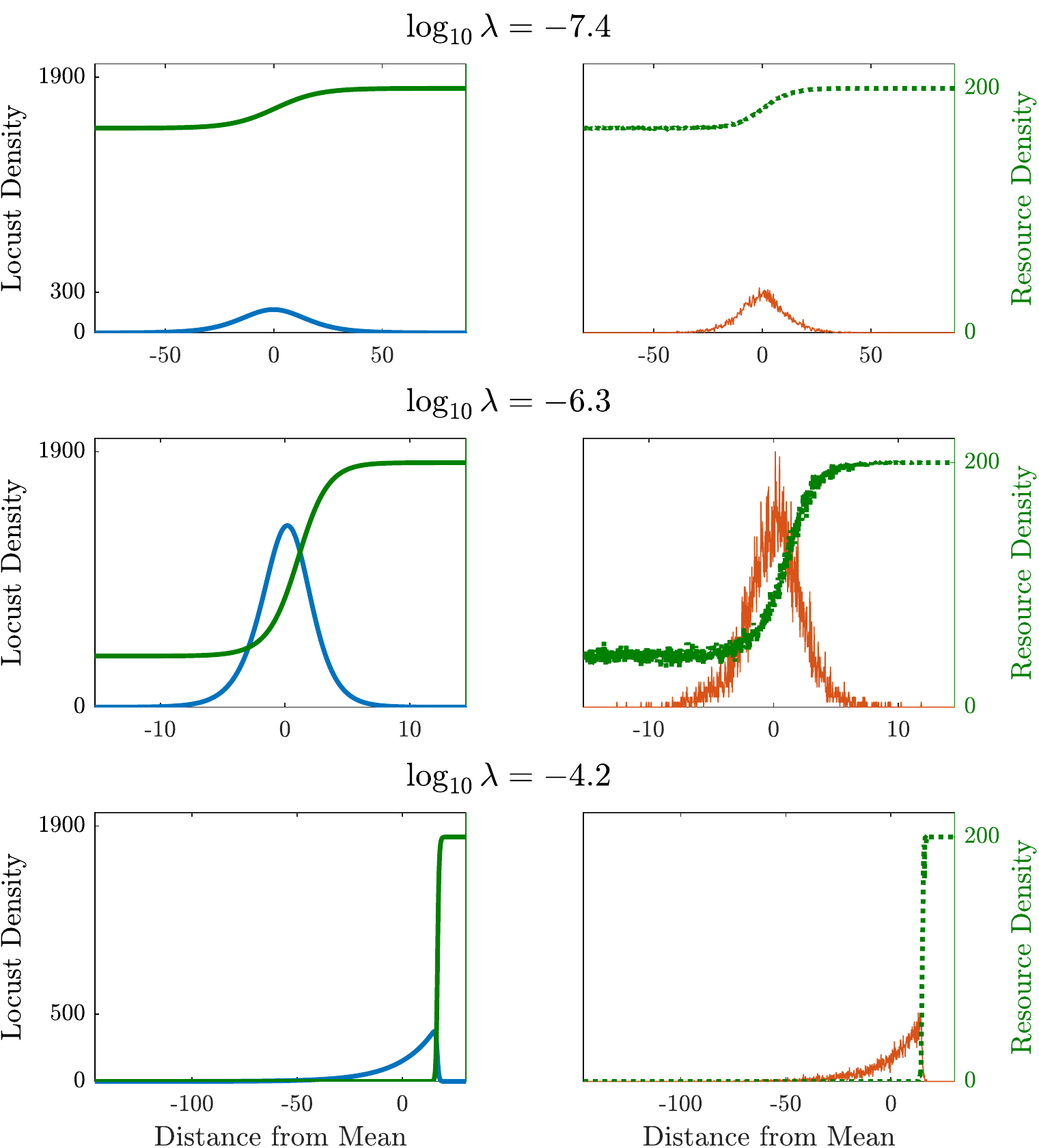}
    \caption{Model outputs from direct numerical simulation for $10^6$ time steps. Density profiles from the PDE (blue, left) and histograms from the ABM (orange, right) for selected foraging rates $\log(\lambda) = -7.4, -6.3, -4.2$. For quick visual shape comparison, all outputs shifted so that center of mass is $x=0$. Each plot corresponds to a data point in Fig \ref{fig:cfShape} (gold x for ABM, gold dot for PDE). Note the differences in scale on the horizontal axes in each plot.}
    \label{fig:cfShape_snap}
\end{figure}


\subsection{Parameter Sensitivity Analysis}
\label{sec: sensitivity}

\begin{figure}[ht!]
    \centering
	\myfigWidth{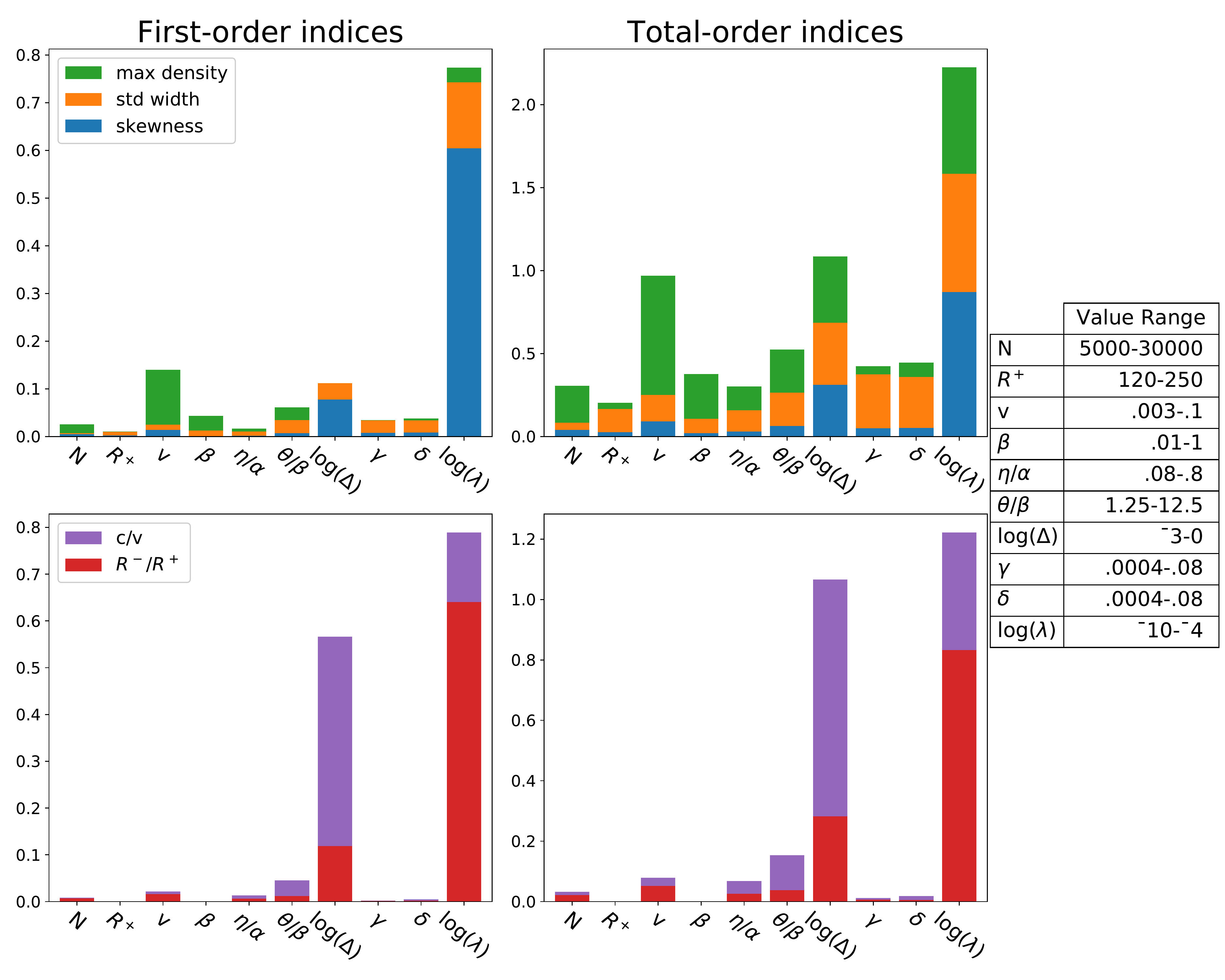}
    \caption{Sensitivity of various traveling wave observables to model parameters (bars are stacked). See Table \ref{table: paramRanges} for parameter definition and ranges; this analysis was run using 4,400,000 samples from the given ranges. All log functions are base-10. First-order indices neglect all interactions with other parameters while total-order indices measure sensitivity through all higher-order interactions. Max 95\% confidence intervals for the response variables was 0.01 for the first-order indices, 0.049 for total-order.}
    \label{fig:sensitivity}
\end{figure}

\begin{figure}[hb!]
    \centering
    \myfigWidth{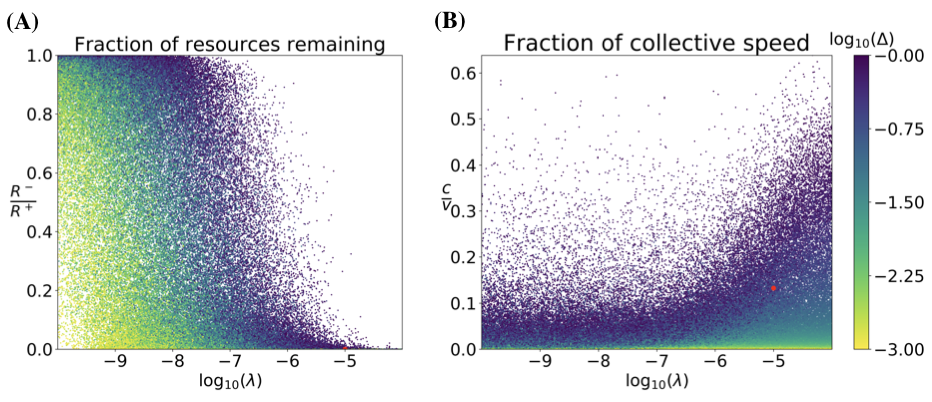}
    \caption{Scatter plot of (A) remaining resource fraction $R^-/R^+$ and (B) fraction of the traveling wave speed $c$ over individual locust speed $v$ as a function of the foraging rate $\lambda$ and colored by $\Delta$, the difference in asymptotic switching rates behind and ahead of the pulse. Points are taken from the parameter ranges in Table \ref{table: paramRanges} and represent $5\%$ of all the points sampled for the Sobol analysis, chosen randomly. The red dots represent the example parameter set described in Table \ref{table: paramRanges}.}
    \label{fig:lambda}
\end{figure}

The sensitivity of the model to its parameters was examined by computing Sobol indices \cite{Sob2001} for several biologically observable quantities (see Table \ref{table: collective}) with samples from the parameter space chosen via Saltelli's extension of the Sobol sequence \cite{Sal2002,SalAnnAzz2010}. Sobol indices represent a global, variance-based sensitivity analysis for nonlinear models that has become extremely popular in recent years for examining the performance of mathematical models in the context of data (e.g., \cite{AggHusDel2018,BatPeaStr2019}). One of its strengths is the ability to calculate not just first-order (one-at-a-time) parameter sensitivity, but also second-order (two-at-a-time) and total-order (all possible combinations of parameters that include the given parameter) indices \cite{SalAnnAzz2010}. All indices are normalized by the variance of the output variable. Here, we will focus on the first-order and total-order indices, and note that the presence of higher-order interactions between the parameters can be inferred by comparing differences between these two.

Scalar output quantities for our model (our collective observables) were all chosen with respect to the asymptotic traveling wave solution of the PDE model and are calculated by solving analytically for this solution. The observables chosen are the speed of the traveling wave $c$, the density of remaining resources $R^-$, the peak (maximum) density of the wave profile, the width of the profile measured by its standard deviation $\Wstd$, and the skewness $\Sigma$ of the profile. Section \ref{sec: outcomes} 
provides physically relevant ranges for these observables from empirical studies. 

In the case of switching parameters $\alpha$, $\beta$, $\theta$, and $\eta$, sensitivity to the ratios $\theta/\beta$ and $\eta/\alpha$ and the ratio difference $\Delta = \alpha/\beta - \eta/\theta$ were used rather than the parameters themselves. One reason for this choice is to guarantee existence of a traveling wave solution; existence is guaranteed whenever $\Delta > 0$. Note that we also would like $\eta/\alpha < 1$ and $\theta/\beta > 1$ so that $k_{sm}$ is a decreasing function of resources $R$ and $k_{ms}$ is an increasing function of resources. Additionally, these two conditions imply that $\Delta > 0$ so there is consistency between these constraints. Another reason for using these ratios lies in mathematical interpretation: $\Delta$ is a measure of the difference in asymptotic switching rates behind the pulse (small $R$, $\alpha/\beta$) and ahead of the pulse (large $R$, $\eta/\theta$), and the two other ratios $\eta/\alpha$ ($\theta/\beta$) describe how much the stationary to moving (moving to stationary) switching rates depend on $R$. More specifically, as these ratios approach 1 the switching rate changes little as $R$ increases, while $\eta/\alpha$ close to zero or $\theta/\beta$ large implies a relatively large change in the switching rate.
With these ratios and a value for $\beta$ (chosen because we have some biological data for $\beta$), all four parameters in the ratios are uniquely determined.

Results are shown in Fig \ref{fig:sensitivity}. All bars are stacked with each color corresponding to a different observable; reading across the parameters, the length of like colors can be compared. Critically, the parameter sensitivity is with respect to the range of parameter values given in the table included with Fig \ref{fig:sensitivity}. These ranges were chosen to represent both biologically expected values (when information about these values could be obtained) and the necessary conditions for a traveling wave solution.

One immediate observation concerning the Sobol sensitivity analysis in Fig \ref{fig:sensitivity} is that $\log_{10}(\lambda)$ and $\log_{10}(\Delta)$ have a large effect on the collective observables of the pulse. Recall that $\lambda$ is the parameter encoding the foraging rate; $\Delta$ is discussed in detail earlier in this section. The bottom row of Fig \ref{fig:sensitivity} shows the impact of these parameters on the density of resources asymptotically left behind the locust band as a fraction of the starting density ($R^-/R^+$) and the ratio of the traveling wave velocity to the marching speed of a locust ($c/v$) respectively. Max density, pulse width as measured by standard deviation, and skewness also depend heavily on these two parameters as seen in the top row of Fig \ref{fig:sensitivity}. This is in fact unsurprising since $\lambda$ and $\Delta$ have by far the largest sample space range in terms of order of magnitude, and for this reason are the only ones examined on a log scale while all other parameters are on a linear scale. To explain this discrepancy, we remind the reader that our chosen sampling ranges represent our uncertainty about the value that the parameters should take on in nature given all the information we were able to find in the biological literature. Our conclusion with this analysis then is that the model is in fact sensitive to this level of uncertainty in $\log_{10}(\lambda)$ and $\log_{10}(\Delta)$, and we should seek to narrow down the possibilities given what we know about observable, biological characteristics of the traveling locust band generated by our parameter choices (Table \ref{table: collective}). Through the following numerical analysis of our sample data, we do just that.

To begin, we further illustrate the effect of varying $\log_{10}(\lambda)$ and $\log_{10}(\Delta)$ on the fraction of resources remaining $R^-/R^+$ (in Fig \ref{fig:lambda}A) and on the ratio of the average pulse speed compared to the speed of a moving locust $c/v$ (in Fig \ref{fig:lambda}B). In each figure, we plot a uniform random subset of the sample points used in the Sobol sensitivity analysis for the purpose of down-sampling the image and better visualizing sparse regions in the parameter space -- it is qualitatively the same when using all sample points from the Sobol analysis. 

Inspecting Fig \ref{fig:lambda}A we note that, generally, at small $\lambda$ a majority of resources persist after the locust front has passed while at large $\lambda$, the majority of resources are consumed. The red dot on the $\lambda$ axis represents the example parameter set described in Table \ref{table: paramRanges} which we believe to be a relatively feasible choice of parameter values in the context of the biological data about the observables in Table \ref{table: collective}. We acknowledge that this appears to suggest the locusts leave behind no vegetation at all, but remember that our variable $R$ represents locust-edible resources -- there may be dry plant matter left behind that even locusts would not consume.

\begin{figure}[hb!]
    \centering
    \myfigWidth{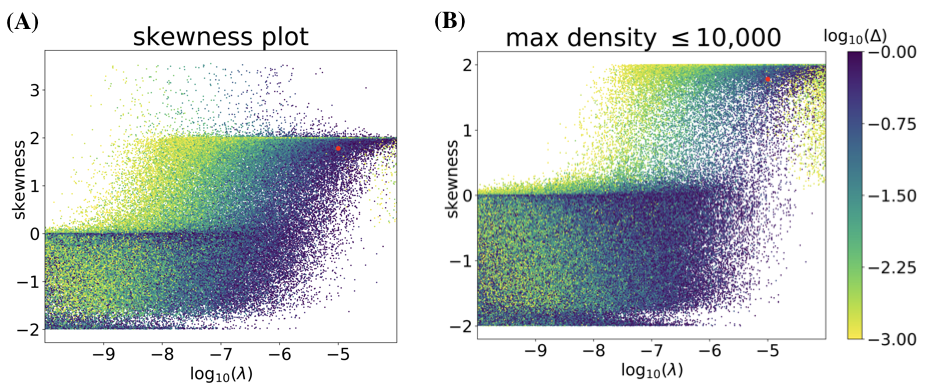}
\caption{Skewness as a function of foraging rate ($\lambda$) and colored by $\Delta$. Fig \ref{fig:skewplot}A is representative of the entire sampled parameter space while Fig \ref{fig:skewplot}B shows only the points with peak wave amplitudes less than 10,000 locusts per square meter. Points are taken from the parameter ranges in Table \ref{table: paramRanges} and represent $5\%$ (in the case of Fig \ref{fig:skewplot}A) and $50\%$ (in the case of \ref{fig:skewplot}B) of all the points sampled for the Sobol analysis, chosen randomly. The red dot represents the example parameter set described in Table \ref{table: paramRanges}.}
\label{fig:skewplot}
\end{figure}

Locust swarms observed in the field have a characteristic sharp rise at the beginning of the front and an exponential decay in the tail, see \cite{BuhSwoCli2011} for a quantitative analysis. This observation suggests that the skewness $\Sigma$ is positive and less than or approximately equal to 2 (see Table \ref{table: collective}). Fig \ref{fig:skewplot} investigates the relationship between skewness $\Sigma$, foraging rate $\lambda$, and the difference of ratios $\Delta$. For $\lambda<10^{-7}$, most values of $\Sigma$ are negative, indicating an unrealistic density profile leaning to the left. As $\lambda$ increases from $10^{-7}$ to $10^{-4}$, $\Sigma$ increases and clusters around 2. A smattering of points appear with $\Sigma>2$ but these all correspond to profiles with unbiologically large maximum locust densities, as demonstrated by Fig \ref{fig:skewplot}B which only shows profiles with maximum locust density $<$10,000.

To identify a set of parameter inputs that would produce a density profile with observable quantities matching those found in the literature (see Section \ref{sec: outcomes}), we finally sorted the data underlying these figures and conditioned on desirable observable properties as specified in Table \ref{table: collective}. This resulted in the example parameters specified in Table \ref{table: paramRanges}, with context provided by Figs \ref{fig:lambda} and \ref{fig:skewplot}. The results of the model run with these parameters can be seen in the figures included within the previous results sections.



\section{Discussion\revised}
\label{sec:discussion}
We present two minimal models for hopper bands of the Australian plague locust and demonstrate that resource consumption can mediate pulse formation. In these models all locusts are aligned and are either stationary (and feeding) or moving. Our agent-based model (ABM) tracks the locations, state, and resource consumption of individuals. In tandem, our partial differential equation (PDE) model represents the mean-field of the ABM. Both models generically form pulses as long as the transition rate from stationary to moving states is diminished by the presence of resources and/or the transition from  moving to stationary states is enhanced by the presence of resources.

The ABM and the PDE each allow us to examine different facets of the problem. The ABM is easy to simulate and directly relates to observations at the scale of individual locusts. It captures pulse formation and propagation, reproduces the stochastic variation seen in the field, and lets us track individual locusts which perform random walks within the band. The PDE model provides a theoretical framework for proving the existence of traveling pulses. This framework facilitates analysis of the collective behavior of the band including mean speed, total resource consumption, maximum locust density,  pulse width, and pulse skewness. In turn, this theory enables us to conduct an in-depth sensitivity analysis of the pulse's characteristics with respect to the input parameters. The two models are consistent in the following sense: the characteristics of pulses in the ABM, when averaged over many realizations, correlate precisely with the densities in the PDE model.

We are fortunate that there is a healthy literature addressing the behavior of the Australian plague locust, notably the shape and speed of observed bands \cite{Cla1949,HunMccSpu2008,BerCha1973,Wri1987, BuhSwoCli2011}. We have used these studies to estimate ranges for the organism-level parameters in our models. Some of these parameters (such as individual marching speed) have been carefully measured yielding narrow ranges. Others (notably the individual foraging rate) can only be deduced to lie within a range of several orders of magnitude. Using these biologically plausible ranges, we analyze the sensitivity of a pulse's characteristics to changes in the input parameters. Sampling parameter values from within these ranges, we examine the resulting speed, remaining resources, and pulse peak, width, and skewness of over $4.4 \times 10^6$ traveling pulse profiles. Sobol sensitivity analysis quantifies the change in these characteristics as a function of the change in each input parameter. Guided by this analysis, we are able to identify a set of parameters that produces pulses concordant with those observed in the field. We conclude that resource-dependent transitions are a consistent explanation for the formation and geometry of traveling pulses in locust hopper bands.

A reasonable question is whether a different mechanism can drive pulse formation or if the formation of pulses is enhanced by a combination of behaviors.  Prior works, both for the Australian plague locust and for other locust species, investigate a variety of social mechanisms for collective movement in hopper bands.  In two agent-based modeling studies \cite{DkhBerHas2017, Bac2018}, pulses are among a handful of aggregate band structures obtained by varying the parameters that model individual locust behavior.  A continuum approach in \cite{EdeWatGru1998} finds traveling pulses in a PDE similar to our Eq~\eqref{eq: PDE} but without accounting for resources. Instead, social behavior is encoded via dependence on locust density of both the transition rates and the speed. This is coarsely akin to our model where resource-dependent transitions between moving and stationary states is necessary for pulse formation, see \nameref{app: telegraph}.  However, a model with social behavior as the only driving factor does not account for the observations of Clark \cite{Cla1949} and Hunter \cite{HunMccSpu2008} that \textit{C. terminifera} manifests pulse-shaped bands with varying shapes based on the surrounding vegetation. We believe that incorporating both social and resource-dependent behaviors will better reproduce field observations.


Further experiments and field observations could help to elucidate the combined roles of resources and social behavior in the formation of hopper bands. While there is a considerable literature on the social \cite{BuhSwoCli2011, GraSwoAns2009} and feeding \cite{BerCha1973, CliSanRea2006} behavior of \textit{C. terminifera}, less progress has been made in quantifying the effect of food on individuals in dense bands exhibiting collective motion. One notable exception is the recent study of Dkhili et al. \cite{DkhMaeHas2019}.  Looking ahead, field data could be collected as video while a hopper band moved through lush vegetation, see the methods in \cite{BuhSwoCli2011}.  With continuing advances in motion tracking, for instance as employed in \cite{NilPaiWar2013}, one could collect time-series data on each individual moving through the frame.  From such data one could draw out the effects of nearby vegetation, satiation or hunger, and local locust density on pause-and-go motion. In turn these processes could be modeled more thoroughly.

We see our present models as a testbed upon which one may develop extensions that capture more of the complexity in locust hopper behavior.  The most natural of these extensions is to consider locusts' social behavior, as discussed above. 
A second is to include stochastic, individual, and environmental variation.
This could be incorporated into the agent-based model in order to examine the robustness of pulse characteristics with respect to a distribution of individual marching speeds, or even large hops, as in \cite{Bac2018}.
For the PDE model, random variations in locust movement could naturally be represented by a linear diffusion term.  
Thirdly, we could incorporate motion of locusts transverse to the primary direction of propagation. This two-dimensional model might aim to capture the curving of the front of hopper bands often seen in the field.  
Lastly, large changes in resource density could be included to represent the band entering or exiting a lush field or pasture, with a view towards informing barrier control strategies as discussed in \cite{Hun2004, DkhMaeHas2019}.  These extensions could help explain the variety of morphologies and density profiles -- including curving dense fronts, complex fingering, and lower-density columns -- observed in hopper bands of the Australian plague locust and other species.

\section*{Acknowledgments}
This collaboration began as part of a Mathematics Research Communities of the American Mathematical Society on Agent-based Modeling in Biological and Social Systems (2018), under National Science Foundation grant DMS-1321794.  A follow-up visit was also supported by a collaboration travel grant from the AMS MRC program. A subsequent research visit was supported by the Institute for Advanced Study Summer Collaborators Program. Jasper Weinburd gratefully acknowledges the support of an NSF Mathematical Sciences Postdoctoral Research Fellowship grant DMS-1902818. Christopher Strickland was supported in part by a grant from the Simons Foundation Grant \# 585322.
Andrew Bernoff's research was supported in part by Simons Foundation Grant \# 317319. Andrew Bernoff also wishes to thank Edward Green and Jerome Buhl and the hospitality of the University of Adelaide for supporting a visit which lead to many insightful conversations about this work. We are deeply grateful to Leah Edelstein-Keshet who brought this problem to our attention and for her thoughtful insights on this work. We are also deeply grateful to Jerome Buhl for his encyclopedic knowledge of locusts, notably the Australian plague locust, for sharing and confirming some of the biological estimates in this paper, and for his comments on an earlier draft of this work.  Finally, we also thank two reviewers whose comments guided us in greatly improving the original manuscript.

\nolinenumbers

\bibliography{LocustSwarmfront2019}

\section{Supporting Information}
\label{sec:supplemental}


\paragraph*{S1 Appendix}
\label{app: telegraph}
{\bf Resource-Independence: The Telegrapher's Equation.}

One of our primary assertions is that, in order for the locust population to form a coherent traveling pulse, the two transition rates $\ksm, \kms$ must depend on the resource density $R$. To demonstrate this we will estimate asymptotically the large-time behavior for the population densities in the case with constant transition rates.

Suppose $\ksm \equiv \alpha$ and $\kms \equiv \beta$ (which is equivalent to setting $\eta = \alpha$ and $\theta = \beta$), then the equations governing the population densities become 
\begin{align}
\begin{split}
	S_t &= -\alpha S + \beta M\\
    M_t &= \pe \alpha S - \beta M - v M_x
\end{split}\qquad\qquad x,t\in\mathbb R \label{eq: telegraph}
\end{align}
These equations are linear with constant coefficients and can be solved by a variety of means. Physically, they correspond to the probability densities associated to an agent-based model of random switching between stationary and moving states. In this interpretation, $\alpha$ is the probability that an agent  transitions from stationary to moving and $\beta$ is the probability of transition from moving to stationary. As such, we can identify this model as a variant of the \emph{Telegrapher's Equation}. Following previous work (see \cite{OthDunAlt1988} and references therein), we use the method of moments to determine the large time behavior. 

We take initial conditions corresponding to a  starting at the origin,
\begin{align}
    S(x,0) = S_0 \delta(x), \qquad M(x,0) = M_0 \delta(x)
\end{align}
where $S_0$ ($M_0$) is the probability is initially stationary (moving) and $\delta$ is the Dirac $\delta$-function. We choose $S_0 + M_0 = 1$, reflecting the fact that this is a probability density.

On average, an agent spends $\frac{\alpha}{\alpha + \beta}$ of its time moving. This leads us to conclude that its average velocity is $c \equiv v \frac{\alpha}{\alpha + \beta}$. This motivates a change of variables $\xi = x - ct$ which yields
\begin{align}
\begin{split}
	S_t &= -\alpha S + \beta M + c S_\xi\\
    M_t &= \pe \alpha S - \beta M + (c-v) M_\xi
\end{split}\qquad\qquad x,t\in\mathbb R \label{eq: telegraph2}
\end{align}
In this co-moving frame an agent  now moves left with speed $c$ (in the $S$ state) and right with speed $c-v$ (in the $M$ state) but is never stationary.

Solutions to this PDE correspond to probability distributions 
which can be characterized by their moments. We define the $n$th moment
\begin{align}
    \Mmo_n (t) &\coloneqq \int_{-\infty}^\infty \xi^n M(\xi,t)\, d\xi\\
    \Smo_n (t) &\coloneqq \int_{-\infty}^\infty \xi^n S(\xi,t)\, d\xi.
\end{align}
Multiplying \eqref{eq: telegraph2} by $\xi^n$ and integrating yields the equations
\begin{align}
\begin{split}
	\frac{d\Smo_n}{dt} &= -\alpha \Smo_n + \beta \Mmo_n - n c \Smo_{n-1}(t)\\
    \frac{d \Mmo_n}{dt} &= \pe \alpha \Smo_n - \beta \Mmo_n - n (c-v) \Mmo_{n-1}(t)
\end{split}\qquad\qquad t\in\mathbb R^+ \label{eq: moment}
\end{align}
A similar calculation yield new initial conditions
\begin{align}
    \Smo_n(0) = S_0\delta_{n,0},\qquad \Mmo_n(0) = M_0 \delta_{n,0}
\end{align}
where $\delta_{p,q}$ is the Kronecker $\delta$-function.

When $n = 0$, we obtain the equations governing $\Smo_0$ and $\Mmo_0$ the probability of a  being in state $S$ (or $M$) at time $t$, 
\begin{align}
\begin{split}
	\frac{d\Smo_0}{dt} &= -\alpha \Smo_0 + \beta \Mmo_0, \qquad\qquad \Smo_0(0) =S_0, \\
    \frac{d \Mmo_0}{dt} &= \pe \alpha \Smo_0 - \beta \Mmo_0, \qquad\qquad \Mmo_0(0)=M_0.
\end{split}  \label{eq: moment0}
\end{align}
The solution is
\begin{align}
   \Smo_0(t) & = \frac{\beta}{\alpha+\beta}\left(1-e^{-(\alpha+\beta)t}\right) + S_0 e^{-(\alpha+\beta)t}\\
    \Mmo_0(t) & = \frac{\alpha}{\alpha+\beta}\left(1-e^{-(\alpha+\beta)t}\right) + M_0 e^{-(\alpha+\beta)t}.
   \label{eq: sol0} 
\end{align}
As $t\to \infty$, $\Smo_0(t) \sim \frac{\beta}{\alpha+\beta}$ and $\Mmo_0(t) \sim \frac{\alpha}{\alpha+\beta}$ as previously advertised.

The first moments $\Smo_1, \Mmo_1$, when divided by $\Smo_0$ and $\Mmo_0$, are the centers of mass for the corresponding the probability distributions $S(\xi,t), M(\xi,t)$. To find these, we solve 
\begin{align}
\begin{split}
	\frac{d\Smo_1}{dt} &= -\alpha \Smo_1 + \beta \Mmo_1 - c \Smo_{0}(t)\\
    \frac{d \Mmo_1}{dt} &= \pe \alpha \Smo_1 - \beta \Mmo_1 - n (c-v) \Mmo_{0}(t)
\end{split}\qquad\qquad t\in\mathbb R^+ \label{eq: moment1}
\end{align} 
and obtain a solution of the form
\begin{align}
   \Smo_1(t) & = (C_1 + C_2 t) e^{-(\alpha+\beta)t} + \frac{\beta v}{(\alpha + \beta)^3}\left[ \beta M_0 -\alpha (1+S_0)\right]\\
    \Mmo_1(t) & = (C_3 + C_4 t) e^{-(\alpha+\beta)t} + \frac{\alpha v}{(\alpha + \beta)^3}\left [-\alpha S_0 + \beta(1+M_0) \right]
   \label{eq: sol1} 
\end{align}
where the $C_i$'s are constants depending on $v,\alpha, \beta, S_0, M_0$. We emphasize that each of $\Smo_1, \Mmo_1$ tend exponentially to a constant as $t \to \infty$. We conclude that, for large times, the centers of mass of the probability distributions are stationary in our co-moving frame. That is, they move with constant speed $c = v\frac{\alpha}{\alpha+\beta}$ as we hypothesized earlier.

We repeat this procedure once more to find the second moments $\Smo_2,\Mmo_2$, which describe the variance of the probability distributions $S(\xi,t), M(\xi,t)$. The solution is of the form
\begin{align}
   \Smo_2(t) & = (D_1 + D_2 t + D_3 t^2) e^{-(\alpha + \beta)t} + D_4 + 
   \left(\frac{2 \alpha \beta^2 v^2}{(\alpha+\beta)^4} \right) t \\
    \Mmo_2(t) & = (D_5 + D_6 t + D_7 t^2)e^{-(\alpha + \beta)t} + D_8 +
    \left(\frac{2 \alpha^2 \beta v^2}{(\alpha+\beta)^4} \right) t
   \label{eq: sol2} 
\end{align}
where the $D_i$'s are again constants. Important here is that the variances of $S(\xi,t)$ and $M(\xi,t)$ are increasing linearly in $t$. This precludes the possibility of a coherent finite mass traveling pulse for large times.

The Central Limit Theorem can be used to show that the distribution at large times is normally distributed with linearly increasing variances. If we define the variances as 
$$ (\sigma_S)^2 = \frac{\Smo_2(t)\cdot\Smo_0(t) -[\Smo_1(t)]^2}{[\Smo_0(t)]^2} ,  \qquad
(\sigma_M)^2 = \frac{\Mmo_2(t)\cdot\Mmo_0(t) -[\Mmo_1(t)]^2}{[\Mmo_0(t)]^2} $$
then, as $t \to \infty$,
$$  (\sigma_S)^2 \sim (\sigma_M)^2 \sim \bar{\sigma} t, \qquad \bar{\sigma} =  \frac{2 \alpha \beta v^2}{(\alpha+\beta)^3} .$$
and
$$ S(\xi,t) = \frac{\beta} {2 (\alpha+\beta)\sqrt{\pi\bar{\sigma} t }}{e^{-\frac{\xi^2}{4\bar{\sigma} t}}} +\cO(t^{-3/2} )\quad M(\xi,t) = \frac{\alpha} {2 (\alpha+\beta)\sqrt{\pi \bar{\sigma} t }}{e^{-\frac{\xi^2}{4\bar{\sigma} t}}} +\cO(t^{-3/2} )$$
which reinforces the random walk interpretation of this distribution.

In summary, when $\ksm$ and $\kms$ are independent of $R$, any initial density distribution will spread diffusively. Next, we will show that if $\tfrac{d \ksm}{dR}\leq 0$ and $\tfrac{d \kms}{dR}\geq 0$ and they are not both zero, that there is a traveling pulse solution to Eq~\eqref{eq: PDE}.

\paragraph*{S2 Appendix}
\label{app: pulseExist}
\textbf{Traveling Wave Analysis.}

Before analyzing the traveling wave solutions to the PDE \eqref{eq: PDE}, we simplify notation by rescaling variables into a nondimensional form. We apply the change of variables
\begin{equation}
S  = \frac{\alpha}{\lambda}\tilde{S}, \qquad 
M  = \frac{\alpha}{\lambda}\tilde{M},\qquad
R  = \frac{1}{\gamma}\tilde{R},\qquad
x  = \frac{v}{\alpha}\tilde{x} ,\qquad
t  = \frac{1}{\alpha}\tilde{t}.
\end{equation}
Note that the dimensionless value of the initial boundary condition is $\tilde R^{+} = \gamma R^{+}$ and the conserved quantity is now $\tilde N = \frac{\lambda}{\alpha}\int_{-\infty}^\infty S+M\,dx$.
This yields (after dropping the $\sim$'s) the dimensionless equations are 
\begin{align}
\begin{split}
    R_t &= - SR\\
	S_t &= -k_{sm}S + k_{ms} M\\
    M_t &= k_{sm}S - k_{ms} M - M_x
\end{split} \label{eq: nonDim}
\end{align}
where now
\begin{align}
	\label{eq: nonDimk} k_{sm} = \phi - (\phi-1)e^{-R}\qquad & \text{and} \qquad k_{ms} = \psi - (\psi - \mu)e^{-\omega R}
\end{align}
with four dimensionless parameters
\begin{align}
	\label{eq: nonDimParams} 
    \phi = \frac{\eta}{\alpha}, \qquad
    \psi = \frac{\theta}{\alpha}, \qquad
    \mu = \frac{\beta}{\alpha}, \qquad \text{and} \qquad
    \omega = \frac{\delta}{\gamma}.
\end{align}

Working from this non-dimensional PDE \eqref{eq: nonDim}, we utilize a standard traveling-wave ansatz. This amounts to assuming that solutions take the form $S(x,t) = S(x-c t)$, and similarly for $M, R$. Here $c$ is the speed of our moving reference frame for which we use the spatio-temporal variable $\xi = x - c t$. The value of $c$ is left to be determined in the forthcoming analysis. We obtain the ODEs
\begin{align*}
    R_\xi & = \frac{1}{c}S R\\
	S_\xi & = \frac{1}{c}(k_{sm} S - k_{ms}M)\\
    M_\xi & = \frac{1}{1-c}(k_{sm} S - k_{ms}M).
\end{align*}
Subtracting the last two equations, we have $c S_\xi - (1-c) M_\xi = 0$. Integrating once, we have the relation
\begin{align}
	\label{eq: SMrel} c S(\xi) - (1-c) M(\xi) = 0
\end{align}
where we know the constant of integration on the right must be zero by considering the long-time/far-distance limit $\xi \to \infty$. Using Eq~\eqref{eq: SMrel}, we rewrite the ODE above in terms of the total density $\rho(\xi) = S(\xi) + M(\xi)$. We now have an equation amenable to phase-plane analysis
\begin{align}
\begin{split}
    R_\xi & = \frac{1-c}{c}\rho R\\
	\rho_\xi & = \left (\frac{k_{sm}}{c} - \frac{k_{ms}}{1-c} \right)\rho.
\end{split} \label{eq: TW}
\end{align}
In this two-dimensional ODE, we prove the existence of heteroclinic connections that correspond to traveling wave solutions of Eq~\eqref{eq: PDE}.




\begin{thm}[Existence of Traveling Waves] 
\label{thm: pulseExist}
For each $c$ such that $\displaystyle 0<\tfrac{\phi}{\phi + \psi} < c < \tfrac{1}{1+\mu}<1$,  there exists a one-parameter family of heteroclinic connections in the phase plane. We parameterize the family by $R^+$. 
Each connection goes from from $(R^-,0)$ to $(R^+,0)$ and corresponds to a traveling wave solution to the PDE \eqref{eq: PDE} that moves with speed $c$, and has uniquely determined total mass $N$, and leaves behind remaining resources $R^-$.
\end{thm}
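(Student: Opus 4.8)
The plan is to convert the planar system \eqref{eq: TW} into a scalar problem by exhibiting an explicit first integral, and then to read off the heteroclinic orbits as its bounded level sets. First, $\{\rho=0\}$ consists entirely of equilibria (each right-hand side of \eqref{eq: TW} carries a factor $\rho$), so the open upper half-plane $\{\rho>0\}$ is invariant, and there $R_\xi=\tfrac{1-c}{c}\,\rho R>0$ since $0<c<1$. Hence along every orbit $R$ is a strictly increasing function of $\xi$ and can serve as the independent variable. Put $g(R):=\tfrac{\ksm(R)}{c}-\tfrac{\kms(R)}{1-c}$. Dividing the two equations of \eqref{eq: TW}, the factor $\rho$ cancels, giving $\tfrac{d\rho}{dR}=\tfrac{c}{1-c}\,\tfrac{g(R)}{R}$; equivalently, with $G(R):=\int_{R^{*}}^{R}\tfrac{g(s)}{s}\,ds$ (the base point $R^{*}$ is fixed below), the function $H(R,\rho):=\tfrac{c}{1-c}G(R)-\rho$ is constant on solutions of \eqref{eq: TW}. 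So every orbit in $\{\rho>0\}$ lies on a curve $\rho=\tfrac{c}{1-c}\bigl(G(R)-\ell\bigr)$ for some real $\ell$.

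Next I analyze $g$, hence $G$. Since $\ksm$ is non-increasing and $\kms$ is non-decreasing, $g$ is non-increasing; and the hypothesis $\tfrac{\phi}{\phi+\psi}<\tfrac1{1+\mu}$ excludes the case in which both rates are constant (the resource-independent situation of \nameref{app: telegraph}), so $g$ is strictly decreasing. A short computation gives $g(0)=\tfrac1c-\tfrac{\mu}{1-c}$, which is positive exactly when $c<\tfrac1{1+\mu}$, and $g(\infty)=\tfrac{\phi}{c}-\tfrac{\psi}{1-c}$, which is negative exactly when $c>\tfrac{\phi}{\phi+\psi}$ --- precisely the two inequalities in the hypothesis. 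Thus $g$ has a unique zero $R^{*}\in(0,\infty)$, with $g>0$ on $(0,R^{*})$ and $g<0$ on $(R^{*},\infty)$. Because $G'=g/R$, the function $G$ strictly increases on $(0,R^{*})$, strictly decreases on $(R^{*},\infty)$, attains a strict global maximum $G(R^{*})=0$, and tends to $-\infty$ as $R\to 0^{+}$ and as $R\to\infty$ (the integrand behaves like $g(0)/s$ near $0$ and $g(\infty)/s$ near $\infty$). Hence $G:(0,\infty)\to(-\infty,0]$ is unimodal, and for every $\ell<0$ the level set $\{G=\ell\}$ is exactly two points $R^{-}(\ell)<R^{*}<R^{+}(\ell)$, with $G>\ell$ on the open interval between them.

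Now fix $\ell<0$ and set $\rho(R):=\tfrac{c}{1-c}\bigl(G(R)-\ell\bigr)$ on $[R^{-},R^{+}]$: this is non-negative, vanishes only at the two endpoints, and vanishes there to first order since $G'(R^{\pm})=g(R^{\pm})/R^{\pm}\neq 0$. The curve $\{(R,\rho(R))\}$ is flow-invariant for \eqref{eq: TW}, and splits into the two equilibria $(R^{\pm},0)$ and a single orbit in the interior (where $\rho>0$, so $R$ is strictly monotone in $\xi$). To see this orbit is a genuine heteroclinic connection reaching $(R^{\pm},0)$ only in the limits $\xi\to\pm\infty$, I would estimate the elapsed variable $\xi=\int\tfrac{dR}{R_\xi}=\tfrac{c}{1-c}\int\tfrac{dR}{\rho(R)\,R}$: near $R^{\pm}$ the integrand is comparable to a constant times $|R-R^{\pm}|^{-1}$, so the integral diverges, forcing $R\to R^{\pm}$ and $\rho\to 0$ as $\xi\to\pm\infty$. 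Reconstructing $S=(1-c)\rho$ and $M=c\rho$ from \eqref{eq: SMrel} (both non-negative because $0<c<1$) then yields a traveling-wave solution of \eqref{eq: PDE} with speed $c$. Finally, the same change of variables gives the total mass, $N=\int_{-\infty}^{\infty}\rho\,d\xi=\tfrac{c}{1-c}\int_{R^{-}}^{R^{+}}\tfrac{dR}{R}=\tfrac{c}{1-c}\ln\!\bigl(R^{+}/R^{-}\bigr)$, which is finite, and which --- since $R^{-}$ is pinned down by $R^{+}$ via $\ell=G(R^{+})$ --- is uniquely determined by $R^{+}$. Letting $R^{+}$ range over $(R^{*},\infty)$ produces the claimed one-parameter family (one could equally index it by $\ell$, by $R^{-}$, or by $N$).

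The delicate step is the passage from the formal invariant curve to an honest heteroclinic orbit. The endpoints $(R^{\pm},0)$ are non-hyperbolic equilibria --- the entire $R$-axis is fixed --- so stable/unstable-manifold theory does not apply directly; one must instead argue by hand, using divergence of $\int dR/(\rho R)$ at the endpoints (or a normal-form/center-manifold reduction in the transverse $\rho$-direction), both that the connecting orbit takes infinite $\xi$-time to approach each equilibrium and that it remains in the strip $R^{-}<R<R^{+}$. Everything else --- the sign analysis of $g$, the unimodality of $G$, and the mass integral --- is elementary once the first integral $H$ is in hand.
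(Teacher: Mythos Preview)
Your proof is correct and takes a genuinely different route from the paper's. The paper argues geometrically: it locates the vertical $\rho$-nullcline $R=R^*$, builds two invariant rectangles $A$ (to the left of $R^*$) and $B$ (to the right, capped by a line parallel to the stable eigenspace at $(R^+,0)$), and then uses sign information on $R_\xi,\rho_\xi$ together with a Poincar\'e--Bendixson argument to trap the stable manifold of $(R^+,0)$ until it crosses $R=R^*$ and lands on some $(R^-,0)$. Your argument is instead analytic and constructive: you observe that $d\rho/dR=\tfrac{c}{1-c}\,g(R)/R$ is independent of $\rho$, so $H(R,\rho)=\tfrac{c}{1-c}G(R)-\rho$ is a first integral, and the heteroclinics are exactly the bounded level curves of $H$ once you establish the unimodality of $G$.

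What each approach buys: the paper's invariant-region method is the standard qualitative toolkit and would survive perturbations that destroy the first integral. Your approach, on the other hand, is more elementary (no Poincar\'e--Bendixson, no stable-manifold theory), yields an explicit profile $\rho(R)=\tfrac{c}{1-c}(G(R)-G(R^+))$, gives uniqueness of $R^-$ for each $R^+$ for free, and produces the mass formula $N=\tfrac{c}{1-c}\ln(R^+/R^-)$ in the same breath --- a formula the paper derives only later, in \nameref{app: EqnsMono}, by essentially the same change of variables you use here. You also handle the non-hyperbolic endpoints more carefully than the paper does: the paper invokes the ``stable manifold'' of $(R^+,0)$ without comment on the zero eigenvalue along the $R$-axis, whereas you note the degeneracy explicitly and resolve it via the logarithmic divergence of $\int dR/(\rho R)$ near $R^\pm$.
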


\begin{proof}
Consider the nullclines of system Eq~\eqref{eq: TW}. The $R$-nullclines are given by the lines $R=0$ and $\rho=0$. The $\rho$-nullclines are given by $\rho=0$ and the vertical line that satisfies
\begin{equation}
K(R) \coloneqq \frac{k_{sm}(R)}{c} - \frac{k_{ms}(R)}{1-c} = \frac{\phi - (\phi-1)e^{-R}}{c} - \frac{\psi - (\psi-\mu)e^{-\omega R}}{1-c} = 0.
 \label{eq: rhonull}
 \end{equation}
The set of all equilibria is exactly the line $\rho = 0$; no other equilibria exist in the interior of the first quadrant $\rho,R>0$.

Note that 
$$K'(R) = \frac 1c \frac{d k_{sm}}{dR} - \frac 1{1-c} \frac {d k_{ms}}{dR} $$
which is guaranteed to be less than zero for some $c$ as long as 
$\tfrac{d \ksm}{dR}\leq 0$ and $\tfrac{d \kms}{dR}\geq 0$ and they are not both zero.

The $\rho$-nullcline is a vertical line occurring at $R = R^*$ where $R^*$ satisfies $K(R^*)=0$. We must ensure that there is an interval of $R^+$ values such that $R^*\in (0,R^+)$. Note that $K(0)>0$, $K(\infty) < 0$, and $K'(R)<0$. By the continuity of $P$ as $R \to \infty$, we can apply the Intermediate Value Theorem and guarantee a unique $R^* \in (0,R^+)$ for any large enough choice of $R^+$. 

Fixing $R^+$ sufficiently large, we proceed with an invariant region argument, see Figure \ref{fig: phaseplane}. Define the rectangle
$A = \{ 0<R\leq R^*, 0<\rho<\rho^*\}$ for an arbitrary $\rho^*$ to be determined below.
Note that region $A$ is invariant as $\xi$ decreases. 
This is simply due to the fact that $R_\xi > 0$ and $\rho_\xi \geq 0$ on all of A. Therefore, any trajectory intersecting the $\rho$-nullcline $\{R = R^*\}$ must remain in region $A$ as $\xi$ decreases. By the Poincar\'e-Bendixson Theorem, the trajectory must terminate on some point $(R^-,0)$ as $\xi \to -\infty$.

\begin{figure}[hb]
\centering
\myfigWidth{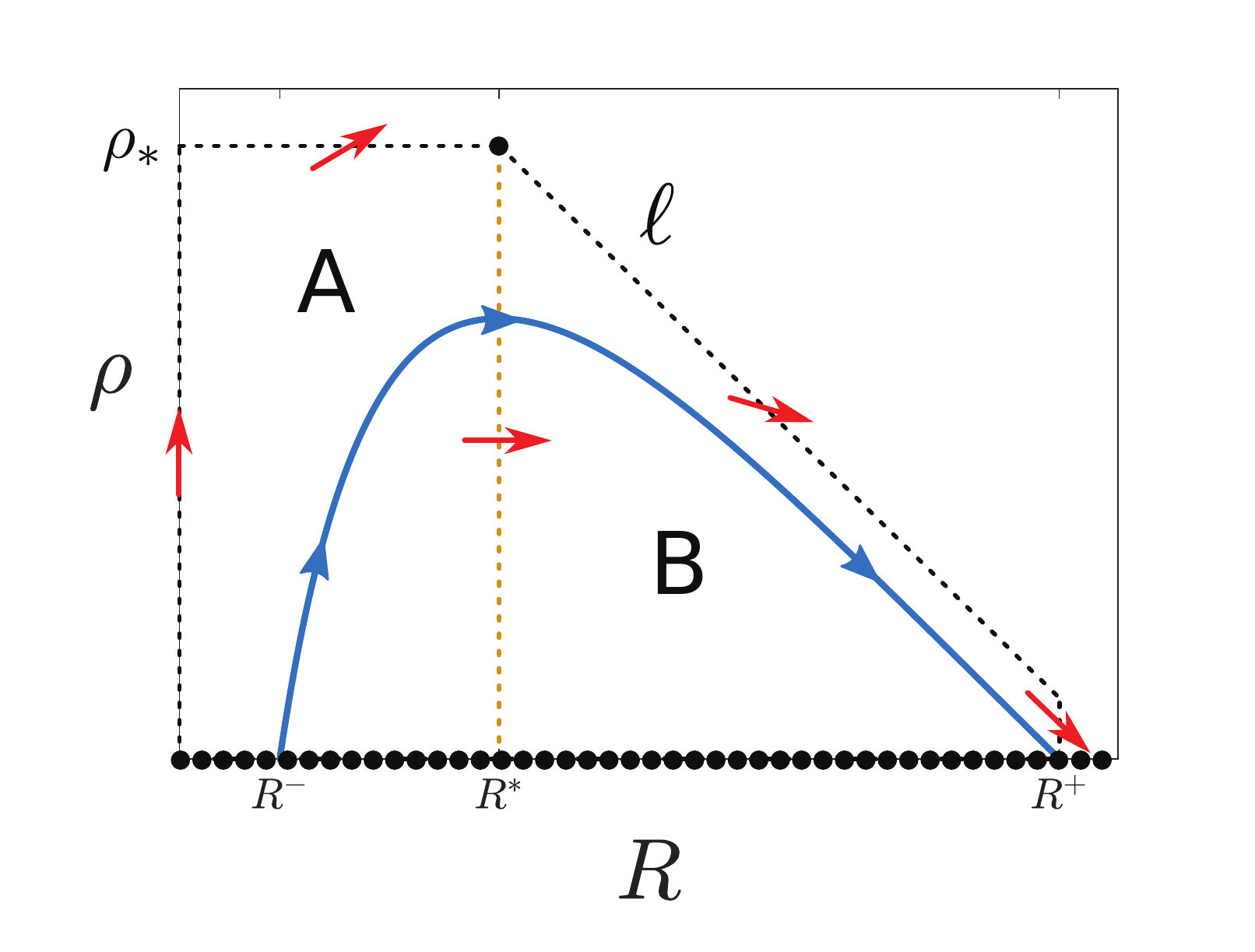}
\caption{The $(R,\rho)$ phase plane with semi-invariant regions $A,B$ bounded by dotted lines including the $\rho$-nullcline (gold), sample arrows for the vector field (red), and a heteroclinic from $(R^-,0)$ to $(R^+,0)$ (blue).}
\label{fig: phaseplane}
\end{figure}

We define region $B$ by choosing the upper boundary as a line segment parallel to and $\varepsilon$-above the stable eigenspace of the linearization of Eq~\eqref{eq: TW} at $(R^+,0)$ for some small $\varepsilon >0$. That is, 
\begin{align*}
B = \{R^*\leq R<R^+, 0<\rho<\ell(R)\} \quad \text{where} \quad\ell (R) = \frac{1}{R^+} \frac{c}{1-c}K({R^+})(R-R^+)+\varepsilon.
\end{align*} 
(Note that we may now choose the upper bound of region $A$ to be $\rho^* = \ell(R^*)$.)

All that remains is to show that the stable manifold $\mathcal W^s$ of $(R^+,0)$ is contained within region $B$ until it exits (as $\xi$ decreases) through the vertical $\rho$-nullcline. Since $\mathcal W^s$ is locally tangent to the stable eigenspace near $(R^+,0)$, we know that it is contained in region $B$ as $\xi\to \infty$. Also $\mathcal W^s$ cannot end, as $\xi \to -\infty$, on any of the equilibria composing the lower boundary $\{\rho =0\}$ of $B$. This is simply due to the fact that $\rho_\xi <0$ on the interior of region $B$ so we must have that $\rho$ increases as $\xi$ decreases. 
Since $R_\xi > 0$ on the vertical line segment between $(R^+,0)$ and $(R^+,\epsilon)$, it is impossible for $\mathcal W^s$ to exit $B$ there as $\xi$ decreases. Finally, $\mathcal W^s$ cannot exit along $\ell$ because the slope of the vector field
\begin{align*}
\frac{\rho_\xi}{R_\xi} = \frac{1}{R} \frac{c}{1-c}K({R}) \qquad \text{at any point} \qquad (R,\rho)
\end{align*}
is greater than the slope of $\ell$. This follows from the fact that $K'(R)<0$ and that $R<R^+$ on all of $\ell$. In fact, we have shown that, as $\xi$ decreases, no trajectory may exit region $B$ through any part of its boundary other than the vertical $\rho$-nullcline.

\end{proof}

\paragraph*{S3 Appendix}
\label{app: EqnsMono}
{\bf Formulas for $N,c, R^+,R^-$.}

Following the existence result in \nameref{app: pulseExist}, we obtain explicit formulas that relate $N,c, R^+,$ and $R^-$. Given any two of these variables and model parameter values, the following equations determine the other two variables:
\begin{align}
	\label{eq: ratDim} \frac{I_1}{I_2} & = \frac{c}{v-c}\\
	\label{eq: massDim} N \frac{\lambda}{v} & = \frac{c}{v-c} \ln{(R^+/R^-)}
\end{align}
where 
\begin{align}
I_1 = \int^{R^+}_{R^-} \frac{\ksm(R)}{R}\, dR, \qquad \text{ and }\qquad  I_2 = \int^{R^+}_{R^-} \frac{\kms(R)}{R}\, dR.
\label{eq: Is}
\end{align}
We prove that equivalent equations hold for the nondimensionalized model Eq~\eqref{eq: nonDim}.
\begin{thm}
Given a traveling wave solution to Eq~\eqref{eq: nonDim}. Then the quantities $N,c,R^+,R^-$ satisfy
\begin{align}
	\label{eq: rat} \frac{I_1}{I_2} & = \frac{c}{1-c}\\
	\label{eq: mass} N & = \frac{c}{1-c} \ln{(R^+/R^-)}
\end{align}
where $I_1,I_2$ are given by Eq~\eqref{eq: Is} with the nondimensional versions of $\ksm,\kms.$
\end{thm}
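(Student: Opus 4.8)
The plan is to work entirely inside the reduced phase-plane system \eqref{eq: TW} for $(R,\rho)$ together with the linear relation \eqref{eq: SMrel}. From \eqref{eq: SMrel} we have $S=(1-c)\rho$ and $M=c\rho$, hence the conserved mass of the nondimensional model is $N=\int_{-\infty}^{\infty}(S+M)\,d\xi=\int_{-\infty}^{\infty}\rho(\xi)\,d\xi$ in the co-moving frame (the change of variables $x\mapsto\xi$ is a shift at fixed $t$ and leaves this integral unchanged). Both target identities will then come from integrating the two equations in \eqref{eq: TW} along the heteroclinic connection supplied by Theorem \ref{thm: pulseExist}, using the boundary data $R(\pm\infty)=R^{\pm}$ and $\rho(\pm\infty)=0$.

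For the mass formula \eqref{eq: mass}, I would divide the $R$-equation of \eqref{eq: TW} by $R$ to get $(\ln R)_\xi=\tfrac{1-c}{c}\,\rho$, then integrate over $\xi\in(-\infty,\infty)$: the left-hand side telescopes to $\ln(R^{+}/R^{-})$ and the right-hand side is $\tfrac{1-c}{c}\,N$, which rearranges to $N=\tfrac{c}{1-c}\ln(R^{+}/R^{-})$. For the ratio formula \eqref{eq: rat}, I would use that $R$ is strictly increasing in $\xi$ along the connection (since $R_\xi=\tfrac{1-c}{c}\rho R>0$ in the interior), so the orbit can be reparametrized by $R\in(R^{-},R^{+})$ with $\tfrac{d\rho}{dR}=\tfrac{\rho_\xi}{R_\xi}=\tfrac{c}{1-c}\,\tfrac{K(R)}{R}$, where $K(R)=\tfrac{k_{sm}(R)}{c}-\tfrac{k_{ms}(R)}{1-c}$ as in \eqref{eq: rhonull}. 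Integrating from $R^{-}$ to $R^{+}$ and using $\rho(R^{\pm})=0$ gives $\int_{R^{-}}^{R^{+}}\tfrac{K(R)}{R}\,dR=0$; splitting $K$ into its two pieces and recalling the definition \eqref{eq: Is} of $I_1,I_2$ yields $\tfrac{I_1}{c}=\tfrac{I_2}{1-c}$, i.e. $\tfrac{I_1}{I_2}=\tfrac{c}{1-c}$. The dimensional statements \eqref{eq: ratDim}--\eqref{eq: massDim} then follow by undoing the rescaling of \nameref{app: pulseExist}.

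The one genuine subtlety — and the step I would be most careful about — is justifying the improper integrals near the two endpoints, where $\rho\to0$ and $R_\xi\to0$ simultaneously. I would invoke the hyperbolicity of the linearization of \eqref{eq: TW} at $(R^{\pm},0)$: the nonzero eigenvalue there is $K(R^{\pm})$, which is strictly negative at $R^{+}$ and strictly positive at $R^{-}$ because $K$ is decreasing with its unique zero $R^{*}$ strictly between $R^{-}$ and $R^{+}$. Hence $\rho$ decays exponentially in $\xi$ at both ends, so $N=\int\rho\,d\xi$ converges and the $\xi$-integration in the previous paragraph is legitimate. Correspondingly, viewed as a function of $R$ the orbit $\rho(R)$ vanishes linearly at each endpoint, so the integrand $K(R)/R$ extends continuously to $[R^{-},R^{+}]$ \emph{provided} $R^{-}>0$ — which is precisely the condition under which $N$ and $\ln(R^{+}/R^{-})$ are finite, so the two formulas are mutually consistent. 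I would state this $R^{-}>0$ proviso explicitly and note that it holds for the one-parameter family of Theorem \ref{thm: pulseExist}.
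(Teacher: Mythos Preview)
Your proposal is correct and follows essentially the same approach as the paper: both arguments derive \eqref{eq: mass} by integrating $(\ln R)_\xi=\tfrac{1-c}{c}\rho$ over $\xi$, and derive \eqref{eq: rat} by reparametrizing the heteroclinic by $R$ (using $R_\xi>0$) and integrating $\tfrac{d\rho}{dR}=\tfrac{c}{1-c}\tfrac{K(R)}{R}$ between the endpoints where $\rho$ vanishes. Your added discussion of endpoint integrability via the hyperbolic linearization and the explicit $R^{-}>0$ proviso goes beyond what the paper records, but is a welcome and correct refinement rather than a departure in method.
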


\begin{proof}
A traveling wave solution satisfies the ODE \eqref{eq: TW}.
Since $R_\xi>0$ in the first quadrant of the phase plane, we can write $\rho$ as a function of $R$ along any heteroclinic. Thus $\frac{d \rho}{d\xi} = \frac{d \rho}{d R}\frac{d R}{d \xi}$ so integrating along a heteroclinic, we have
\[
\int^{R^+}_{R^-} \frac{\rho_\xi}{R_\xi}\, dR = \int^{R^+}_{R^-} \frac{d\rho}{d R}\, dR = \rho(R^+) - \rho(R^-) = 0.
\]
We also have 
\[
\int^{R^+}_{R^-} \frac{\rho_\xi}{R_\xi}\, dR = \int^{R^+}_{R^-} \frac{c}{1-c}\frac{K(R)}{R} dR =  \frac{1}{1-c} I_1 - \frac{c}{(1-c)^2}I_2
\]
where $K(R)$ is given in Eq~\eqref{eq: rhonull}. Therefore we have proved Eq~\eqref{eq: rat}.

Dividing the equation for $R_\xi$ in Eq~\eqref{eq: TW} by $R$ and integrating the left hand side, we have
\[
\int_{-\infty}^\infty \frac{R_\xi}{R}\, d\xi = \ln\left(\frac{R^+}{R_-}\right).
\]
Meanwhile, the right hand side gives us
\[
\int_{-\infty}^\infty \frac{1-c}{c} \rho\, d\xi = \frac{1-c}{c} N,
\]
proving Eq~\eqref{eq: mass}.
\end{proof}

In fact, these equalities can be used to prove monotonicity of the mass-speed relation.

\begin{thm} Fix $R^+$. Then the speed $c$ is a strictly increasing function of mass $N$ and a strictly decreasing function of $R^-$. 
\end{thm}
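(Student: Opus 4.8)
The plan is to exploit the two explicit relations from the previous theorem. Fixing $R^+$, a traveling wave is determined by the single remaining parameter $R^-\in(0,R^+)$: Eq~\eqref{eq: rat} gives $\frac{c}{1-c}=I_1/I_2$, which determines $c$ from $R^-$, and then Eq~\eqref{eq: mass} gives $N=\frac{c}{1-c}\ln(R^+/R^-)=\frac{I_1}{I_2}\ln(R^+/R^-)$, which determines $N$ from $R^-$. Here $I_1,I_2$ are the integrals in Eq~\eqref{eq: Is}, and because $R^+$ is held fixed they are functions of the lower endpoint $R^-$ alone. Since $c\mapsto c/(1-c)$ is a strictly increasing bijection of $(0,1)$, it suffices to prove (i) $R^-\mapsto I_1/I_2$ is strictly decreasing, and (ii) $R^-\mapsto N$ is strictly decreasing. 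Then $c$ is strictly decreasing in $R^-$; composing with the inverse of the strictly decreasing map $R^-\mapsto N$ shows $c$ is strictly increasing in $N$.

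For (i) I differentiate with respect to the lower limit. Writing $k_1=k_{sm}$, $k_2=k_{ms}$, one has $\tfrac{d}{dR^-}I_j=-k_j(R^-)/R^-$, so the quotient rule gives
\[
\frac{d}{dR^-}\!\left(\frac{I_1}{I_2}\right)=\frac{1}{R^-I_2^{\,2}}\Big(k_{ms}(R^-)I_1-k_{sm}(R^-)I_2\Big)=\frac{1}{R^-I_2^{\,2}}\int_{R^-}^{R^+}\frac{k_{ms}(R^-)k_{sm}(R)-k_{sm}(R^-)k_{ms}(R)}{R}\,dR.
\]
The crux is that the integrand is $\le 0$ throughout $[R^-,R^+]$, with strict inequality on a set of positive measure. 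This is precisely the monotonicity of the log-ratio $R\mapsto k_{ms}(R)/k_{sm}(R)$: since $k_{ms}$ is nondecreasing, $k_{sm}$ is nonincreasing, and both are positive on $(0,R^+]$, the ratio is nondecreasing, and it is strictly increasing on a subinterval because the two switching rates are not both constant (equivalently $k_{sm}'\le0$, $k_{ms}'\ge0$ are not both identically zero). Hence $\tfrac{d}{dR^-}(I_1/I_2)<0$.

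For (ii), set $J:=k_{sm}(R^-)I_2-k_{ms}(R^-)I_1>0$ (minus the bracket above) and $L:=\ln(R^+/R^-)>0$, so that $\tfrac{d}{dR^-}(I_1/I_2)=-J/(R^-I_2^{\,2})$ and $dL/dR^-=-1/R^-$. Then
\[
\frac{dN}{dR^-}=-\frac{JL}{R^-I_2^{\,2}}-\frac{I_1}{R^-I_2}=-\frac{1}{R^-}\left(\frac{JL}{I_2^{\,2}}+\frac{I_1}{I_2}\right)<0,
\]
since every term in the parentheses is positive. This proves (ii). For completeness one checks $N\to0$ as $R^-\to R^+$ and $N\to\infty$ as $R^-\to0^+$ (using $k_{sm}(R)\to k_{sm}(0)>0$, so $I_1\sim k_{sm}(0)\ln(R^+/R^-)$ while $I_1/I_2$ stays bounded away from $0$), so $R^-\mapsto N$ is a strictly decreasing bijection onto $(0,\infty)$ and $c$ is a well-defined strictly increasing function of $N$ on the whole range.

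The main obstacle is recognizing and justifying the sign of $k_{ms}(R^-)k_{sm}(R)-k_{sm}(R^-)k_{ms}(R)$ — the log-ratio monotonicity of the switching rates; after that everything reduces to elementary differentiation. The one place to be careful is bookkeeping: the derivatives are with respect to the \emph{lower} endpoint of the integrals (hence the minus signs), and $R^+$, which is both the upper endpoint of $I_1,I_2$ and the point where each heteroclinic terminates, must be held fixed so that $I_1,I_2$ depend on $R^-$ only.
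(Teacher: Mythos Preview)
Your proof is correct and shares with the paper the decisive step: the sign of $k_{ms}(R^-)I_1-k_{sm}(R^-)I_2$, which both of you deduce from the monotonicity of the ratio $k_{ms}/k_{sm}$. For the $R^-$-dependence your argument is essentially identical to the paper's.

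For the $N$-dependence the paper takes a slightly different path: it treats $R^-$ as an implicit function of $N$, differentiates both relations $s=I_1/I_2$ and $s=N/\ln(R^+/R^-)$ with respect to $N$, equates the two expressions to solve for $dR^-/dN$, and substitutes back to obtain $ds/dN>0$ directly. Your route---showing $R^-\mapsto N$ is strictly decreasing and then composing its inverse with the already-established strictly decreasing map $R^-\mapsto c$---is more economical, since it reuses part~(i) rather than redoing an implicit-differentiation computation. The surjectivity check you add (that $N$ ranges over all of $(0,\infty)$ as $R^-$ varies over $(0,R^+)$) is a nice touch that the paper omits; it is what makes the inversion step fully rigorous.
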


\begin{proof}
Let $\dis s = \frac{c}{1-c}$. Then Eqs~\eqref{eq: rat}-\eqref{eq: mass} become
\begin{equation}  s = \frac{I_1}{I_2} \label{21s}  \end{equation}
\begin{equation} s = \frac{N}{\lnrr}  \label{22s}. \end{equation}

First, we show $\dis \frac{ds}{dR^-} < 0$. Taking the derivative of Eq~\eqref{21s}, we get
\begin{equation}
\frac{ds}{dR^-}= \frac{I_2 \frac{d I_1}{dR^-} - I_1\frac{d I_2}{dR^-}}{I_2^2} = \frac{I_1k_{ms}(R^-) - I_2 k_{sm}(R^-)}{R^-I_2^2}.
\end{equation}
We will show the numerator, call it $\mathcal I$, is negative. Dividing $\mathcal I$ by $k_{sm}(R^-) \cdot k_{ms}(R^-)$, we get
\begin{equation}
	\frac{I_1}{k_{sm}(R^-)} - \frac{I_2}{k_{ms}(R^-)} = \int_{R^-}^{R^+} \tfrac{1}{R} \left (\frac{k_{sm}(R)}{k_{sm}(R^-)} - \frac{k_{ms}(R)}{k_{ms}(R^-)}\right)\, dR.
\end{equation}
The integrand $\frac{k_{sm}(R)}{k_sm(R^-)} - \frac{k_ms(R)}{k_ms(R^-)}$ is less than zero for $R^- < R < R^+$. To see this, note that at $R=R^-$ this integrand is $0$. Also we know that $k_{sm}$ is decreasing in $R$ and $k_{ms}$ is increasing in $R$, so the first term decreases and the second term (without the negative sign) increases. Then the integrand is indeed negative for all $R > R^-$.

Second, we will show $\dis \frac{ds}{dN}>0$. Differentiating Eqs~\eqref{21s} and \eqref{22s} respectively, we get
\begin{equation} \frac{dS}{dN} = \frac{\lnrr + \frac{N}{R^-}\frac{dR^-}{dN}}{\lnrr^2} \label{21dsdn} \end{equation}
and 
\begin{equation} \frac{dS}{dN} = \frac{dR^-}{dN}\left[ \frac{I_1\kms(R^-) - I_2\ksm(R^-)}{I_2^2R^-}\right]. \label{22dsdn} \end{equation}
Setting these equal to each other, we obtain
\begin{equation} \frac{dR^-}{dN} = \frac{R^-\lnrr}{\lnrr^2\left[\frac{I_1\kms(R^-)-I_2\ksm(R^-)}{I_2^2}\right] - N}. \label{drdn} \end{equation}
Substituting Eq~\eqref{drdn} into either Eq~\eqref{21dsdn} or Eq~\eqref{22dsdn}, we obtain
\begin{equation} \frac{ds}{dN} = \frac{\lnrr  \left[I_1\kms(R^-) - I_2\ksm(R^-)\right]}{\lnrr^2\left[I_1\kms(R^-) - I_2\ksm(R^-)\right]-NI_2^2} = \frac{\lnrr  \mathcal I}{\lnrr^2\mathcal I-NI_2^2}.
 \end{equation}
We have already shown that $\mathcal I < 0$. Because $R^+>R^-$, the numerator is negative. The denominator is also negative, so have shown that $\dis \frac{ds}{dN}>0$.
\end{proof}

\end{document}